\newtheorem{theorem}{\bf Theorem}[section]
\newtheorem{definition}[theorem]{Definition}
\newtheorem{example}[theorem]{Example}
\newtheorem{lemma}[theorem]{Lemma}
\newtheorem{proposition}[theorem]{Proposition}
\DeclareMathOperator*{\argmin}{arg\,min}
\theoremstyle{plain}
\theoremstyle{remark}
\newcommand{\polya}{P\'{o}lya}
\newcommand{\remove}[1] {}
\title{Computational Aspects of Optional \polya{} Tree}
\author[1,2,*]{Hui Jiang}
\author[3]{John C. Mu}
\author[4]{Kun Yang}
\author[2]{Chao Du}
\author[2]{Luo Lu}
\author[2,5,*]{Wing Hung Wong}
\affil[1]{Department of Biostatistics, University of Michigan}
\affil[2]{Department of Statistics, Stanford University}
\affil[3]{Department of Electrical Engineering, Stanford University}
\affil[4]{Institute for Computational and Mathematical Engineering, Stanford University}
\affil[5]{Department of Health Research and Policy, Stanford University}
\affil[*]{Please send all correspondence to jianghui@umich.edu and whwong@stanford.edu}
\begin{document}

\maketitle

\bigskip
\begin{abstract}Optional \polya{} Tree (OPT) is a flexible non-parametric Bayesian model for density estimation. Despite its merits, the computation for OPT inference is challenging. In this paper we present time complexity analysis for OPT inference and propose two algorithmic improvements. The first improvement, named Limited-Lookahead Optional \polya{} Tree (LL-OPT), aims at greatly accelerate the computation for OPT inference. The second improvement modifies the output of OPT or LL-OPT and produces a continuous piecewise linear density estimate. We demonstrate the performance of these two improvements using simulations.
\end{abstract}

\noindent%
{\it Keywords:} density estimation, Bayesian inference, recursive partition, smoothing, time complexity.

\section{Introduction}

Given independent random samples, characterizing a distribution is one of the most fundamental tasks in statistics. Density estimation is a key approach to solve such problems and is closely related to other problems such as clustering, classification and regression~\citep{Silverman1986, Scott1992, Hastie2009}. Kernel density estimation~\citep{Rosenblatt1956, Parzen1962, Jones1996, Gray2003, Fan1994, Wand1994, Yang2003} and data-driven histograms~\citep{Scott1979, Barron1992, Lugosi1996, Klemela2009} are popular methods for non-parametric density estimation. Among many approaches developed for density estimation, non-parametric Bayesian models have the merits of being highly flexible and computationally feasible. Dirichlet process~\citep{Ferguson1973}, \polya{} tree~\citep{Ferguson1974} and their extentions are commonly used non-parametric Bayesian priors. Recently, \citet{WongLi2010} proposed an extension to the \polya{} tree prior, named Optional \polya{} Tree (OPT), which allows for optional stopping and random selection of partitioning variables. It has three attractive properties: 1) It fits the data adaptively; 2) It produces absolutely continuous distributions; and 3) The computation for exact OPT inference can be done in finite steps. OPT has been successfully applied in both one-sample~\citep{WongLi2010} and two-sample~\citep{LiWong2011} settings.

Albeit all these merits, there are existing issues which may hinder the application of OPT. Two major issues are: 1) The running time for exact OPT inference increases rapidly as sample size and the dimension of the data increase, which becomes prohibitive when the dimension is high; 2) OPT gives rise to piecewise constant densities\footnote{The posterior mean of OPT gives a smooth density. However, its estimation is usually computationally prohibitive~\citet{WongLi2010}. The hierarchical maximum a posteriori (hMAP) estimate proposed by~\citet{WongLi2010} is computationally more friendly, but gives rise to piecewise constant density estimates.}, where discontinuities occur at boundaries between the pieces, which could become undesirable for applications where smooth densities are sought. 

In this paper we attempt to tackle the above two problems. In particular, we demonstrate that the computation for OPT inference can be greatly accelerated by performing approximate inference. We also present an approach which modifies the output of OPT and produces a continuous piecewise linear density estimate. This paper is organized as follows: In Section 2 we review the construction of OPT and the algorithm for its exact inference. In Section 3 we present an approach for reducing the computation named Limited-Lookahead Optional \polya{} Tree (LL-OPT). Section 4 gives the time complexity analysis for OPT and LL-OPT. In Section 5 we present an approach for the estimation of smooth densities. Simulation studies are given in Section 6.

\section{Optional \polya{} Tree and its inference}
\label{sec:OPT}

Proposed by~\citet{WongLi2010}, OPT is an extension to the \polya{} tree prior~\citep{Ferguson1974}. OPT defines a procedure which gives rise to random probability measures on a space $(\Omega, \mu)$. In this paper we assume that $\Omega$ is a bounded hyperrectangle in $\mathbb{R}^p$ and $\mu$ is the Lebesgue measure. The theory of OPT applies to more general cases, e.g., $\mu$ is a counting measure when $\Omega$ is finite.

The OPT procedure operates simultaneously as two processes: the first process recursively and randomly partitions the space $\Omega$ and the second process recursively assigns a random probability measure into each region in the partition generated by the first process.

The recursive partitioning process works as follows\remove{(see Figure~\ref{OPT_construction})}: Starting from the root region $A=\Omega$, a random binary sample $S$ is drawn according to $\text{Bernoulli}(\rho(A))$ where $\rho(A)$ is a parameter of the OPT. If $S=1$, we stop further partitioning $A$ and mark it as a terminal region. If $S=0$, a random integer $J\in\{1,\ldots,J_A\}$ is drawn according to a probability vector $\boldsymbol\lambda(A)=(\lambda_1(A),\ldots,\lambda_{J_A}(A))$, where $J_A$ is the number of different ways of partitioning $A$ into subregions and $(\lambda_1(A),\ldots,\lambda_{J_A}(A))$ is a set of parameters of the OPT such that $\sum_{j=1}^{J_A}\lambda_j(A)=1$. If $J=j$, $A$ is partitioned in the $j$-th way into $I^j_A$ subregions $\{A^j_1,\ldots,A^j_{I^j_A}\}$ such that
$$A=\bigcup_{i=1}^{I^j_A}A_i^j.$$
The partitioning process is then recursively applied to each subregion $A_i^j$ of $A$, and consequently, a partition tree is built by the process recursively. The partitioning process terminates when all the regions in the partition are marked as terminal regions.

The process for random assignment of probability measures works along with the partitioning process\remove{ (see Figure~\ref{OPT_construction})}. It begins with a probability measure $Q^{(0)}$ which is uniform within the root region $A=\Omega$. Whenever a region $A$ is partitioned into subregions $\{A_i^j\}_{i=1}^{I^j_A}$, a random vector $\boldsymbol\theta^j_A=(\theta_{A,1}^j,\ldots,\theta_{A,I^j_A}^j)$ is drawn from a Dirichlet distribution with parameter $\boldsymbol{\alpha}^j(A)=(\alpha_1^j(A),\ldots,\alpha_{I^j_A}^j(A))$ and then a new probability measure $Q^{(k+1)}$ is built based on $Q^{(k)}$ by reassigning the probability mass within $A$ according to $Q^{(k+1)}(A_i^j)=\theta_{A,i}^jQ^{(k)}(A)$, were $k=k(A)$ is the level of region $A$ in the partition tree with $k(\Omega)=0$. Within each subregion $A_i^j$, the probability mass is again uniformly distributed.

For the OPT procedure described above, there exists a limiting probability measure for $Q^{(k)}$~\citep[Theorem 1]{WongLi2010}: If $\rho(A)$ is uniformly bounded away from $0$ and $1$ for all $A$, then $Q^{(k)}$ converges almost surely in variational distance to a probability measure $Q$ which is absolutely continuous with respect to $\mu$. The density of $Q$ with respect to $\mu$ on $\Omega$ is piecewise constant with countably many pieces.

\remove{
\begin{figure}[!htb]
\begin{centering}
\caption{The construction of an OPT distribution\label{OPT_construction}}
\end{centering}
\end{figure}
}

As a prior distribution on the space of probability measures, one merit of OPT is that the computation for exact OPT inference is analytically manageable. Given independent random samples $\boldsymbol{x}=(x_1,\ldots,x_n)$ from $Q$ when $Q$ has a OPT prior $\pi$ with independent stropping probabilities $\rho(A)$ and all the parameters of $\pi$ are uniformly bounded away from $0$ and $1$, the posterior distribution of $Q$ is also an OPT with the following parameters~\citep[Theorem 3]{WongLi2010}:
\begin{enumerate}
\item Stopping probabilities: $\rho(A|\boldsymbol{x})=\rho(A)\mu(\boldsymbol{x}|A)/\Phi(A),$
\item Selection probabilities:
$$\lambda_j(A|\boldsymbol{x})\propto\lambda_j(A)\frac{D(\boldsymbol{n}^j_A+\boldsymbol{\alpha}^j(A))}{D(\boldsymbol{\alpha}^j(A))}\prod_{i=1}^{I^j_A}\Phi(A_i^j), \mbox{\hspace{1cm}} j=1,\ldots,J_A,$$
\item Probability mass allocation:
$$\alpha^j(A|\boldsymbol{x})=\boldsymbol{\alpha}^j(A)+\boldsymbol{n}^j_A, \mbox{\hspace{1cm}} j=1,\ldots,J_A,$$
\end{enumerate}
where $\boldsymbol{n}^j_A=(n_{A,1}^j,\ldots,n_{A,I^j_A}^j)$ are the numbers of samples in $\boldsymbol{x}$ falling into each regions of the partition, $D(\boldsymbol{t})=\Gamma(t_1)\cdots\Gamma(t_k)/\Gamma(t_1+\cdots+t_k)$,  $\Phi(A)=P(\boldsymbol{x}(A)|A)$ is the marginal likelihood of the data conditional on $A$ which can be calculated using the following recursive equation:
\begin{equation}
\Phi(A)=\rho(A)\mu(\boldsymbol{x}|A)+(1-\rho(A))\sum_{j=1}^{J_A}\lambda_j(A)\frac{D(\boldsymbol{n}^j_A+\boldsymbol{\alpha}^j(A))}{D(\boldsymbol{\alpha}^j(A))}\prod_{i=1}^{I^j_A}\Phi(A_i^j).
\label{recursion}
\end{equation}

There are many different schemes for partitioning the regions. In this paper we consider a simple binary partitioning scheme proposed in~\citet{WongLi2010}: For $A=\{(t_1,\ldots\,t_p):t_j\in[l_j,u_j]\}$, i.e., a hyperrectangle in $\mathbb{R}^p$, there are $J_A=p$ ways to partition $A$. For the $j$-th way, $j=1,\ldots,p$, we partition $A$ into two subregions at the midpoint of the range of $t_j$ such that $A_1^j=\{t\in A:t_j< (l_j+u_j)/2\}$ and $A_2^j=A\backslash A_1^j$.

\section{Acceleration of computation}
\label{sec:algorithms}

According to~(\ref{recursion}), exact OPT inference requires computing $\Phi(A)$ recursively, which is computationally challenging since the running time increases rapidly as the dimension of the data increases. This is evident based on the fact that in~(\ref{recursion}) the value of $\Phi(A)$ has to be computed for every region $A$ with at least one data point and the number of such regions is enormous when the dimension is high. Figure~\ref{OPT_runtime} shows the running time for OPT inference for samples of various sizes and dimensions, using a fast algorithm for exact OPT inference which will be introduced Section~\ref{complexity}. We can see that the running time increases nearly exponentially with respect to the dimension of the data and it becomes prohibitive when the dimension is $8$ or higher. The analysis of the time complexity for OPT inference is given in Section~\ref{complexity}.

\begin{figure}[!htb]
\center
\includegraphics[scale=0.6]{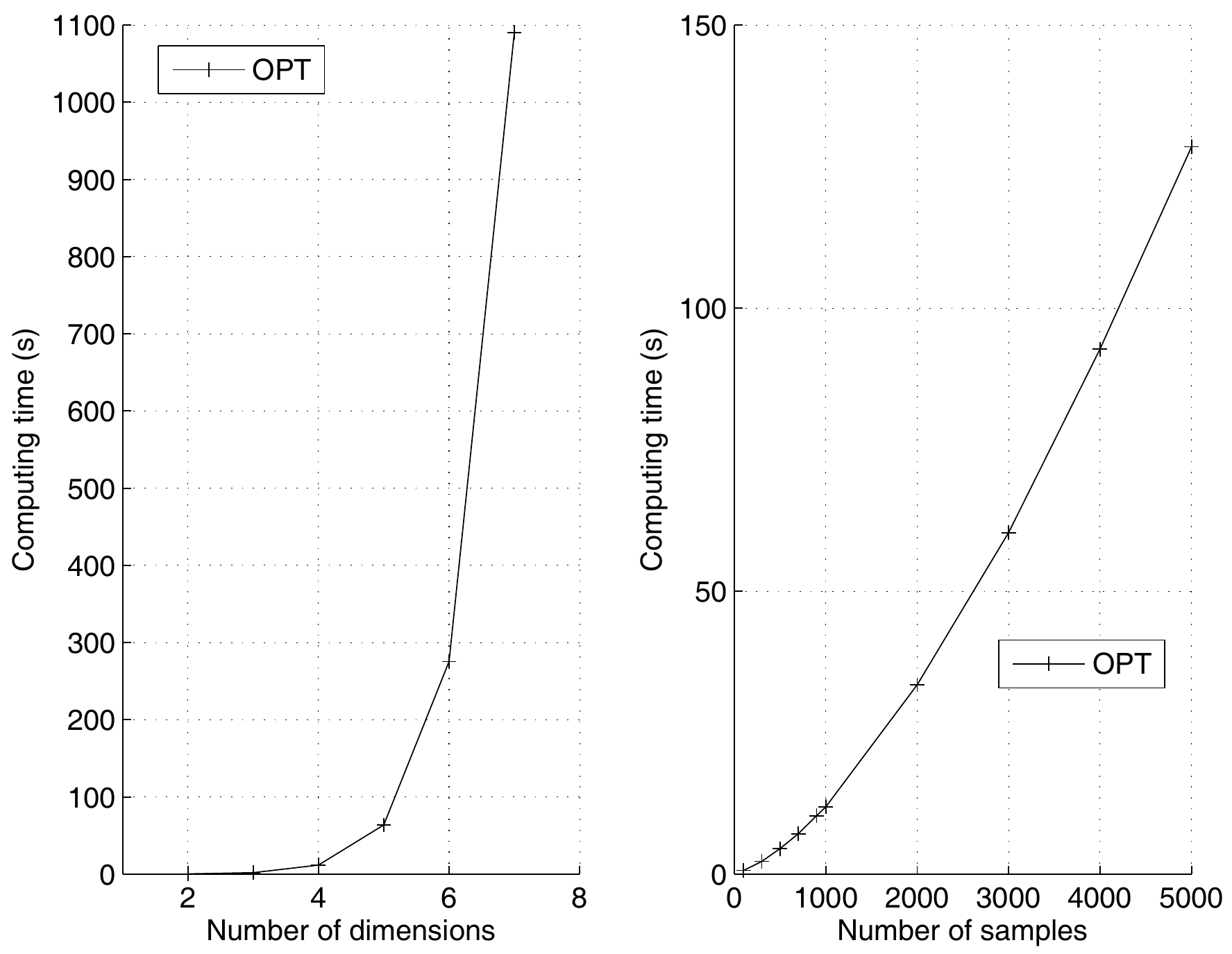}
\caption{\label{OPT_runtime} The running time for exact OPT inference using the cached implementation (See Section~\ref{complexity}). Running times (in seconds) are plotted against different dimensions with sample size fixed at 1000 (left figure) as well as different sample sizes with number of dimensions fixed at 4 (right figure). The samples are distributed as bivariate normal in two dimensions and uniform in other dimensions. \remove{The red lines are NI-OPT and the green lines are LL-OPT.}
}
\end{figure}

The computation for OPT inference can be greatly accelerated by performing the inference approximately. In a straightforward approach, named Naive Inexact Optional \polya{} Tree (NI-OPT), we stop the recursion in~(\ref{recursion}) when the number of samples in a region $A$ is small or when the volume of $A$ is small, and compute $\Phi(A)$ by assuming that all the samples in $A$ are uniformly distributed and therefore no further partitioning of $A$ is necessary. The rationale for employing these heuristics is that when the volume of $A$ is small, error in $\Phi(A)$ is only likely to affect the estimated density locally, and when the number of samples in $A$ is small, $\Phi(A)$ would not be too different from $\Phi^\prime(A)$ which is computed by redistributing the samples in $A$ uniformly. 

In practice, we found that the running time for the NI-OPT approach, although significantly reduced compared with the exact OPT approach, is still prohibitive when the sample size is large and when the dimension is high. In this paper we propose a more efficient algorithm for approximate OPT inference, named Limited-Lookahead Optional \polya{} Tree (LL-OPT), which works as follows\remove{ (see Figure~\ref{LL-OPT_alg})}: starting from the root region $A=\Omega$, we perform the recursive computation in~(\ref{recursion}) for up to $h$ levels in the partition tree, where $h$ is a tuning parameter which provides a trade-off between running time and estimation accuracy. For a region $B$ at level $k(B)=k(A)+h$, we stop the recursion and approximate $\Phi(B)$ as $\Phi^\prime(B)$ using the uniform approximation as we did in NI-OPT. Based on the computed $\Phi(A)$ and $\Phi(A_s)$ where $A_s (s\in S)$ are all the subregions of $A$, we compute the hierarchical maximum a posteriori (hMAP) tree using the approach described in~\citet{WongLi2010} for up to $q$ levels, where $q\leq h$ is another tuning parameter. In our implementation we fix $q=1$. For each leaf node in the hMAP tree, we recursively apply the LL-OPT algorithm and build a sub-tree by treating that leaf node as the root node. Furthermore, like in NI-OPT, we can also stop the procedure when the number of samples in the region is small enough, or when the volume of the region is small enough.

\remove{
\begin{figure}[!htb]
\begin{centering}
\caption{The Limited-Lookahead Optional \polya{} Tree (LL-OPT) algorithm\label{LL-OPT_alg}}
\end{centering}
\end{figure}
}

\section{Time complexity analysis for OPT and LL-OPT}
\label{complexity}

In this section we study the time complexity for the exact OPT and the LL-OPT algorithms. For simplicity we consider only the simple binary partitioning scheme described in Section~\ref{sec:OPT}. However, the analysis technique may be adapted to more complicated partitioning schemes. For each region $A$, the exact OPT algorithm according to~(\ref{recursion}) has to undertake the following three computational tasks:
\begin{enumerate}
\item For each subregion $A^j_i$ of $A$, $j=1,\ldots,p,i=1,2$, determine $n^j_{A,i}$, the counts of samples within $A^j_i$.
\item For each subregion $A^j_i$ of $A$, compute $\Phi(A^j_i)$ recursively using~(\ref{recursion}).
\item Calculate $\Phi(A)$ according to~(\ref{recursion}) based on the counts found in task 1 and $\Phi(A^j_i)$ computed for each subregion $A^j_i$ in task 2.
\end{enumerate}

For task 1, as $n\rightarrow\infty$, the straightforward way of counting the number of samples in each of the $2p$ subregions of $A$ will take $\Theta(2pN)$ operations, where $N$ is the total number of samples in $\Omega$. However, this cost can be reduced by trading space for time. If we can store the set of samples in $A$, counting the number of samples in each of the $2p$ subregions of $A$ will only take $\Theta(2pn)$ operations, where $n$ is the number of samples in $A$.\remove{ To keep a manageable memory footprint. In our implementation the partition tree is searched in a depth-first manner so that only the samples for the regions currently stored in the call stack need to be stored.} Let $f(n)$ denote the time for computing $\Phi(A)$ for a region $A$ with $n$ samples. The time complexity for task 2 is therefore $\sum_{j=1}^p\{f(n^j_{A,1})+f(n^j_{A,2})\}$. For task 3, the time complexity is $\Theta(2p)$ if we assume that function $D(\mathbf{t})$ can be computed in $\Theta(1)$ time. The time complexity for computing $\Phi(A)$ is therefore
\begin{equation}
\label{f_n}
\begin{array}{lll}
f(n)&=&\Theta(2pn)+\sum_{j=1}^p\{f(n^j_{A,1})+f(n^j_{A,2})\}+\Theta(2p)\\
	&=&\Theta(pn)+\sum_{j=1}^p\{f(n^j_{A,1})+f(n^j_{A,2})\}
\end{array}
\end{equation}

It is shown in~(\ref{f_n}) that the value of $f(n)$ depends on the values of $n^j_{A,1}$ and $n^j_{A,2}$, which are determined by the distribution of samples within $A$. The more uniformly the samples are distributed in $A$, the less the computation time is required. Here we estimate $f(n)$ for the following three cases:
\begin{enumerate}
\item $n^j_{A,1}=tn$ and $n^j_{A,2}=(1-t)n$, where $0.5\leq t<1$ is a constant.
\item $n^j_{A,1}=Tn$ and $n^j_{A,2}=(1-T)n$, where $T$ is uniformly distributed within $(0,1)$. 
\item $n^j_{A,1}=c$ and $n^j_{A,2}=n-c$, where $c>0$ is a constant.
\end{enumerate}
We see that case 2 represents a more realistic picture between the two extremes in case 1, namely $t=0.5$ and $t$ is close to $1$, and case 3 is the limit of case 1 as $1-t$ goes to $c/n$. In effect, case 1 indicates that the number of levels to be explored is $\log_{1/t}n$, and case 3 indicates that the number of level is $n/c$. Although only 3 specific cases are discussed here, the results actually provide full coverage over general scenarios.

\begin{theorem}
\label{thm_OPT}
For cases 1, 2 and 3, the time complexity for the exact OPT algorithm is
$f(n)=O(pn^{\log_{1/t}2p})$ for case 1, $f(n)=O(n^{2p-1})$ for case 2 and $f(n)=O(np^{n/c})$ for case 3, where $n$ is the sample size and $p$ is the dimension.
\end{theorem}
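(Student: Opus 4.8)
The plan is to read (\ref{f_n}) as a divide-and-conquer recurrence in the single variable $n$ and to solve it case by case, treating $p$ and the constants $t,c$ as fixed while $n\to\infty$. Throughout I would use that $f$ is nondecreasing in $n$ and that $f(n)=\Theta(p)$ whenever $n$ is bounded (the base case: for a region with $O(1)$ points the uniform approximation costs a single evaluation of $\Phi$, i.e.\ $\Theta(p)$ work). In Cases 1 and 3 the split is deterministic, so $f(n)$ is a genuine function of $n$; in Case 2 the split is random, so $f(n)$ must be interpreted as the expected running time $g(n):=\mathbb{E}[f(n)]$, and the first task there is to derive a closed recurrence for $g$.

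For Case 1, substituting $n^j_{A,1}=tn$, $n^j_{A,2}=(1-t)n$ into (\ref{f_n}) and using $1-t\le t$ together with monotonicity gives
\begin{equation*}
f(n)\le 2p\,f(tn)+\Theta(pn).
\end{equation*}
This is the standard form $f(n)\le a\,f(n/b)+\Theta(pn)$ with $a=2p$ and $b=1/t$, whose recursion tree has depth $\log_{1/t}n$ and $(2p)^{\log_{1/t}n}=n^{\log_{1/t}2p}$ leaves. Since $2pt\ge 1$ in the relevant range, the leaf term dominates the additive $\Theta(pn)$ cost, and unrolling the geometric sum yields $f(n)=O(p\,n^{\log_{1/t}2p})$. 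Case 3 is a linear (rather than geometric-in-scale) recurrence: with $n^j_{A,1}=c$ constant we have $f(c)=\Theta(p)$, so (\ref{f_n}) collapses to
\begin{equation*}
f(n)=p\,f(n-c)+\Theta(pn).
\end{equation*}
Here the recursion tree has depth $n/c$ and branching factor $p$, hence $O(p^{\,n/c})$ nodes, each contributing $\Theta(p\cdot\text{size})\le\Theta(pn)$; summing the geometric series in the branching factor (dominated by the deepest level) gives $f(n)=O(n\,p^{\,n/c})$.

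The interesting case is Case 2. Writing $T_1,\dots,T_p$ for the independent Uniform$(0,1)$ split fractions along the $p$ coordinates and taking expectations in (\ref{f_n}), the tower property together with the symmetry $T\overset{d}{=}1-T$ collapses the $2p$ subregion terms into
\begin{equation*}
g(n)=\Theta(pn)+2p\int_0^1 g(tn)\,dt=\Theta(pn)+\frac{2p}{n}\int_0^n g(u)\,du.
\end{equation*}
Setting $G(n)=\int_0^n g(u)\,du$ turns this into the first-order linear ODE $G'(n)-\tfrac{2p}{n}G(n)=\Theta(pn)$, which I would solve with the integrating factor $n^{-2p}$. The general solution is a combination of a particular solution of order $n^2$ and the homogeneous solution $K\,n^{2p}$; differentiating back, $g(n)=G'(n)$ is a combination of terms of order $n$ and $n^{2p-1}$, and for $p\ge 2$ the latter dominates, giving $g(n)=O(n^{2p-1})$.

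I expect the main obstacle to be Case 2, specifically justifying the passage from the stochastic recurrence to the integral equation: one must argue that expectation may be exchanged with the recursive calls (the tower property over the random partition), that the discrete counts $T_j n$ may be treated as a continuous variable in $\int_0^1 g(tn)\,dt$, and that $g$ is regular enough (monotone, eventually smooth) for the ODE reduction and the integrating-factor solution to be legitimate. The deterministic Cases 1 and 3 are routine recurrence-unrolling once monotonicity and the $\Theta(p)$ base case are in hand; the only minor care needed there is the edge behavior at $p=1$, where the dominant exponent degenerates and the stated bounds hold with room to spare.
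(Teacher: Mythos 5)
Your proposal is correct and follows essentially the same route as the paper: the same monotone unrolling ($f((1-t)n)\le f(tn)$, geometric sum dominated by the deepest level) for Cases 1 and 3, and for Case 2 the same reduction of the integral recurrence to a first-order linear ODE in a continuous variable $n$. The only cosmetic difference is that in Case 2 you integrate, working with $G(n)=\int_0^n g(u)\,du$ and an integrating factor, where the paper differentiates the recurrence (via integration by parts) to get $f'(n)=2C_1p+(2p-1)f(n)/n$; both yield the same particular solution of order $n$ plus homogeneous term of order $n^{2p-1}$.
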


\begin{proof}[Proof of Theorem~\ref{thm_OPT}]

For case 1, from~(\ref{f_n}) we assume $f(n)\approx C_1pn+p\{f(tn)+f((1-t)n)\}$ for large $n$, where $C_1$ is a constant. \citet{Akra-Bazzi} showed that the solution to the above linear recurrence equation is
$$f(n)=\Theta\left\{n^q\left(1+\int_1^n\frac{C_1pu}{u^{q+1}}du\right)\right\}$$
where $q$ is the solution to the equation $pt^q+p(1-t)^q=1$. Since usually $t$ is unknown, here we derive a upper bound for $f(n)$ as follows.
\begin{eqnarray*}
f(n)&\leq&C_1pn+2pf(tn)\\
	&\leq&C_1pn+2ptC_1pn+(2pt)^2C_1pn+\cdots+(2pt)^{\log_{1/t}n}C_1pn\\
	&=&O(pn(2pt)^{\log_{1/t}n})\\
	&=&O(pn^{\log_{1/t}2p}).
\end{eqnarray*}
In the best case $t=0.5$ and $f(n)=O(pn^{1+\log_2p})$.

Similarly, we can get a lower bound, 
\begin{eqnarray*}
f(n)&\geq&C_1pn+pf(tn)\\
	&\geq&C_1pn+ptC_1pn+(pt)^2C_1pn+\cdots+(pt)^{\log_{1/t}n}C_1pn\\
	&=&\Omega(pn(pt)^{\log_{1/t}n})\\
	&=&\Omega(pn^{\log_{1/t}p}).
\end{eqnarray*}
In the best case $t=0.5$ and $f(n)=\Omega(pn^{\log_2p})$.

For case 2, 
\begin{eqnarray*}
f(n)&=&C_1pn+p\int_0^1\{f(tn)+f(n-tn)\}dt\\
	&=&C_1pn+2p\int_0^1f(tn)dt.\\
\end{eqnarray*}
Regard $n$ as a real-valued parameter and take derivative over $n$ on both sides,
\begin{eqnarray*}
f^\prime(n)&=&C_1p+2p\int_0^1tf'(tn)dt\\
	 &=&C_1p+2p\left\{\left[\frac{t}{n}f(tn)\right]_0^1-\int_0^1\frac{f(tn)}{n}dt\right\}\\
	 &=&2C_1p+\frac{(2p-1)f(n)}{n}.
\end{eqnarray*}
The solution to the above ODE is $f(n)=\frac{C_1pn}{1-p}+C_2n^{2p-1}=O(n^{2p-1})$.

For case 3, 
\begin{eqnarray*}
f(n)&=&C_1pn+p\left\{f(c)+f(n-c)\right\}\\
	&=&p\{C_1n+f(c)\}+p^2\{C_1n+f(c)\}+\cdots+p^{n/c}\{C_1n+f(c)\}\\
	&=&O(np^{n/c})
\end{eqnarray*}
In the worst case $c=1$ and $f(n)=O(np^n)$.
\end{proof}

The simplest implementation of the OPT algorithm is to use depth-first search, which we call DF-OPT. We find DF-OPT to be very inefficient in practice, partly because regions that can be reached through many different paths of binary partitions will have their $\Phi(\cdot)$ values computed repeatedly, which is unnecessary. A more efficient implementation is to cache all the computed $\Phi(\cdot)$ values so that for each region $A$, $\Phi(A)$ will need to be computed only once. In practice we find that this cached implementation can accelerate computation substantially. Therefore, we use the cached implementation as the default implementation for all the simulations (denoted as OPT in experiments). Comparisons between DF-OPT and cached OPT are given in Section~\ref{sec:simulations}. It is worthwhile pointing out here that although cached OPT is much faster than DF-OPT, it only works with limited sample sizes and dimensions because it uses much larger amount of memory. In contrast, DF-OPT has the advantage that its memory usage is minimal. Furthermore, if a partitioning scheme other than the binary partitioning is used, caching might become less effective or even non-effective.

We found that in practice, the cached implementation of OPT is most likely to be limited by memory size rather than by running time. Therefore, we also study the space complexity of the cached OPT here.

\begin{lemma}
\label{lem_region_count}
Under the binary partitioning scheme, the total number of regions at level $k$ is ${k+p-1\choose k}2^k$, where $p$ is the dimension of the sample space. The total number of regions up to level $k$ is therefore $\sum_{i=0}^k{i+p-1\choose i}2^i$.
\end{lemma}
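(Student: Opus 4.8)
The plan is to count level-$k$ regions directly by recording, for each coordinate, how many times it has been bisected along the path from the root. Because each elementary partition in the binary scheme bisects the current region at the midpoint of a single coordinate range, a region at level $k$ is a dyadic box: along coordinate $j$ its range has been halved some number of times $k_j\geq 0$, and $k_1+\cdots+k_p=k$ since exactly $k$ bisections have been performed in total to reach level $k$. The key observation I would make is that such a region is completely determined by the vector $(k_1,\ldots,k_p)$ together with, for each coordinate $j$, the choice of which of the $2^{k_j}$ equal dyadic subintervals of the $j$-th range is retained; the order in which the bisections were applied is irrelevant.

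Given this characterization, I would count in two steps. First, for a fixed subdivision vector $(k_1,\ldots,k_p)$ with $\sum_{j}k_j=k$, the number of distinct regions is the product over coordinates of the number of retained subintervals, namely $\prod_{j=1}^{p}2^{k_j}=2^{k}$, which is independent of the particular vector. Second, the number of admissible vectors $(k_1,\ldots,k_p)$, i.e.\ nonnegative integer solutions of $k_1+\cdots+k_p=k$, is $\binom{k+p-1}{p-1}=\binom{k+p-1}{k}$ by a standard stars-and-bars argument. Multiplying the two counts gives $\binom{k+p-1}{k}2^{k}$ regions at level $k$, and summing this over $i=0,\ldots,k$ yields the stated cumulative count.

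The one point requiring care, and the main conceptual obstacle, is justifying that the coordinate-wise subdivision data is a faithful invariant of a region, rather than counting root-to-node paths in the partition tree. Distinct sequences of bisections (for instance splitting coordinate $1$ then coordinate $2$ versus coordinate $2$ then coordinate $1$) can produce the same geometric region, so naively counting paths would give the overcount $(2p)^{k}$. I would therefore verify explicitly that two bisection sequences yield the same box precisely when they agree in both the per-coordinate split counts and the retained subintervals, which establishes the bijection used above. As a consistency check I note that the generating function $\sum_{k\geq 0}\binom{k+p-1}{k}2^{k}x^{k}=(1-2x)^{-p}$ factors as the $p$-fold product of $(1-2x)^{-1}=\sum_{m\geq 0}2^{m}x^{m}$, reflecting exactly that the $p$ coordinates subdivide independently and that the level is additive across coordinates.
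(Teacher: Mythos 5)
Your proof is correct and is essentially the paper's own argument in different notation: the paper encodes each level-$k$ region as a binary string of length $k$ (the $2^k$ factor) with $p-1$ separators inserted to mark the coordinates (the $\binom{k+p-1}{k}$ stars-and-bars factor), which is exactly your parametrization by per-coordinate split counts $(k_1,\ldots,k_p)$ and retained dyadic subintervals. Your extra care in checking that this data is a faithful invariant of the region (rather than counting the $(2p)^k$ partition paths) is the same uniqueness point the paper asserts with its ``uniquely encoded'' claim, so the two proofs coincide.
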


\begin{proof}
Without loss of generality, assume the root region $\Omega$ is the unit hypercube $[0,1]^p$ in $\mathbb{R}^p$. Under the binary partitioning scheme, any subregion of $\Omega$ can be uniquely encoded by a string consisting of four symbols: `$\epsilon$', `$\mid$', `0' and `1'. For instance, the level-3 subregion $[0,1]\times[.25,.5]\times[.5,1]$ can be encoded by string `$\epsilon\mid01\mid1$', with `$\epsilon$' encodes the interval $[0,1]$, `01' encodes the interval $[.25,.5]$ and `1' encodes the interval $[.5,1]$. Specifically, the binary string $s=`a_1a_2\ldots a_k\textrm'$ encodes the interval $[s/2^k,(s+1)/2^k]$ in binary.
We can therefore count the number of level-$k$ regions by enumerating all the valid region-encoding strings: start from all the $2^k$ binary sequences of length $k$, insert $p-1$ `$\mid$'s in the strings to break it into $p$ segments and insert `$\epsilon$'s in all the empty segments. Because for each binary string of length $k$, there are ${k+p-1\choose k}$ ways of inserting $p-1$ `$\mid$'s into it, the total number of regions is therefore $k$ is ${k+p-1\choose k}2^k$.
\end{proof}

If we assume that the depth of the partition tree is on the order of $\log_2n$ where $n$ is the sample size, which is the case when the samples in $\Omega$ are distributed roughly uniformly, the space required by the cached OPT is $\Theta\{\sum_{i=0}^{\log_2n}{i+p-1\choose i}2^i\}=\Theta\{n{\log_2n+p-1\choose \log_2n}\}$.

The LL-OPT algorithm is more friendly in CPU and memory requirements because for each cut it only looks ahead for $h$ steps, and as a result, the time and space complexity for a single cut is bounded.

\begin{lemma}
\label{lem_count}
Let $A^{(k)}_1,\ldots,A^{(k)}_{m^{(k)}}$ be all the child regions of $\Omega$ at level $k$. Let $n(A)$ be the number of samples in region $A$. Then $\sum_{i=1}^{m^{(k)}}n(A^{(k)}_i)\leq np^k$ where $p$ is the dimension of the space.
\end{lemma}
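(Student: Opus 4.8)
The plan is to compare the quantity $\sum_{i=1}^{m^{(k)}} n(A^{(k)}_i)$, which sums the sample counts over the \emph{distinct} level-$k$ regions, against an easier-to-analyze quantity that sums counts over the recursion tree \emph{with multiplicity}, i.e. counting a region once for every sequence of partitioning cuts leading from $\Omega$ to it. Write $T_k$ for this with-multiplicity total. The observation that drives everything is purely local: for any single region $A$, each of the $p$ binary partitioning schemes splits the samples of $A$ into two disjoint halves whose counts add up to $n(A)$, so the $2p$ children of $A$ taken across all $p$ schemes have sample counts summing to exactly $p\,n(A)$.

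First I would establish the recursion $T_{k+1} = p\,T_k$. Summing the local identity above over all level-$k$ nodes of the multiplicity tree gives $T_{k+1} = \sum_{A} p\,n(A) = p\,T_k$, and since $T_0 = n(\Omega) = n$ this yields $T_k = n p^k$ by a one-line induction on $k$.

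The remaining step is to dominate the distinct-region sum by $T_k$. Setting $S_k := \sum_{i=1}^{m^{(k)}} n(A^{(k)}_i)$, every distinct level-$k$ region is produced by at least one partitioning path and is therefore counted at least once in $T_k$; since all sample counts are nonnegative, discarding the duplicate copies only decreases the total, so $S_k \le T_k = np^k$, which is the claim.

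The only real subtlety, and the step I would be most careful about, is the bookkeeping in this last comparison: under the binary scheme the same region is reachable through several distinct sequences of cuts (exactly the redundancy that the cached implementation exploits), so $S_k$ and $T_k$ genuinely differ and the inequality can be strict. I would therefore state precisely that $T_k$ counts paths while $S_k$ counts regions, and invoke nonnegativity of $n(\cdot)$ to pass from the former to the latter. As a sanity check one can also compute $S_k$ directly: a sample at a generic position lies in exactly ${k+p-1 \choose k}$ level-$k$ regions, one for each way of distributing the $k$ cuts among the $p$ coordinates, giving $S_k \le n{k+p-1 \choose k}$, and the elementary inequality ${k+p-1 \choose k}\le p^k$ both recovers the stated bound and pinpoints the source of the slack.
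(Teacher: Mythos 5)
Your proof is correct and takes essentially the same route as the paper: the paper argues by induction on $k$, using exactly your local fact that the $2p$ children of a region $A$ (across the $p$ splitting schemes) have sample counts summing to $p\,n(A)$, and it absorbs the path-multiplicity issue into the inequality $\sum_i n(A^{(k+1)}_i)\leq p\sum_i n(A^{(k)}_i)$ rather than introducing your with-multiplicity total $T_k$. Your explicit separation of $T_k$ from $S_k$, and the closing sanity check giving the sharper bound $n\binom{k+p-1}{k}\leq np^k$, just make explicit (and slightly refine) the bookkeeping that the paper's one-line inductive step leaves implicit.
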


\begin{proof}
We prove the lemma by induction. For $k=0$, $\Omega$ is the only level-$0$ region, therefore $m(0)=1$ and $\sum_{i=1}^{m^{(0)}}n(A^{(0)}_i)=n=np^0$. Suppose the lemma holds for $0\leq j\leq k$. All the level-$(k+1)$ regions are generated by partitioning some level-$k$ regions. For each level-$k$ region $A$, the sum of sample sizes of its $2p$ child regions is $pn(A)$. Therefore, $\sum_{i=1}^{m^{(k+1)}}n(A^{(k+1)}_i)\leq p\sum_{i=1}^{m^{(k)}}n(A^{(k)}_i)\leq pnp^k= np^{k+1}$.
\end{proof}

\begin{theorem}
\label{thm_LL-OPT}
For the LL-OPT algorithm with parameter $h$, the time complexity is $f(n)=O\{n(n+2^hp)p^h\}$, where $n$ is the sample size and $p$ is the dimension.
\end{theorem}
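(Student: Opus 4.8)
The plan is to decompose LL-OPT into its two nested components and bound each: the cost of one depth-$h$ lookahead, and the number of times a lookahead is invoked over the recursive descent. First I would fix attention on a single lookahead rooted at a region containing at most $n$ samples. By construction this lookahead computes $\Phi$ (exactly at the upper levels, and by the uniform approximation at relative level $h$) for every region up to relative level $h$ in the binary partition subtree. By the recurrence~(\ref{f_n}), the work attached to each such region splits into a sample-counting part of order $\Theta(p\,n(R))$ and a constant combining part of order $\Theta(p)$, and summing these two over the lookahead tree is exactly the quantity I need to control.

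For the counting part I would invoke Lemma~\ref{lem_count}: the total number of samples carried by all regions at relative level $j$ below the lookahead root is at most $np^{j}$. Since counting is performed only at the internal levels $0,\dots,h-1$, the counting cost is $\Theta\!\left(p\sum_{j=0}^{h-1} np^{j}\right)=O(np^{h})$ after summing the geometric series (using $p\ge 2$). For the combining part I would use Lemma~\ref{lem_region_count} together with the elementary bound $\binom{j+p-1}{j}\le p^{j}$, which gives at most $(2p)^{j}$ regions at relative level $j$ and hence $\sum_{j=0}^{h}(2p)^{j}=O(2^{h}p^{h})$ regions in total; multiplying by the $\Theta(p)$ per-region combining cost yields $O(2^{h}p^{h+1})$. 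Adding the two parts produces the per-lookahead bound $O\!\left((n+2^{h}p)p^{h}\right)$.

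The second step bounds the number of lookaheads. Since $q=1$, each lookahead advances one level of the hierarchical MAP tree, and that tree is a genuine recursive partition of $\Omega$ whose leaves are disjoint; stopping once a region holds only a constant number of samples forces $O(n)$ leaves and hence $O(n)$ internal nodes, each triggering exactly one lookahead. Upper-bounding every lookahead's root sample count by the global $n$ and multiplying by this $O(n)$ count then gives $f(n)=O\!\left(n(n+2^{h}p)p^{h}\right)$.

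The main obstacle I expect is the region-enumeration bound inside the single-lookahead estimate: Lemma~\ref{lem_region_count} gives the exact level-$j$ count $\binom{j+p-1}{j}2^{j}$, and I must replace $\binom{j+p-1}{j}$ by the clean upper bound $p^{j}$ so that the geometric series collapses to $O(2^{h}p^{h})$ rather than the sharper but unwieldy $O(2^{h}h^{p-1})$. A secondary point requiring care is the separation of the counting work (done only at levels $0,\dots,h-1$, so that the extra $p$-fold counting factor yields $p^{h}$ and not $p^{h+1}$) from the bottom-level uniform approximations, and the justification that the outer recursion visits only $O(n)$ regions, so that the global bound $n$ may be reused at every lookahead without overcounting.
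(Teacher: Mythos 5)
Your proposal is correct and follows essentially the same route as the paper's proof: bound the cost of a single depth-$h$ lookahead by $O\{(n+2^hp)p^h\}$ (counting cost $O(np^h)$ via Lemma~\ref{lem_count} plus $O(p)$ combining cost over the $O\{(2p)^h\}$ regions in the lookahead tree), then multiply by the $O(n)$ nodes of the hMAP tree. The only cosmetic difference is that you derive the region count from Lemma~\ref{lem_region_count} with the bound $\binom{j+p-1}{j}\le p^j$, whereas the paper uses the direct observation that each region has $2p$ children, so up to $h$ levels there are $O\{(2p)^h\}$ descendants; both yield the same intermediate bound.
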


\begin{proof}[Proof of Theorem~\ref{thm_LL-OPT}]
For the LL-OPT algorithm, for each region $A$, we only need to perform the recursive computation in~(\ref{recursion}) for up to $h$ levels deep. For each child region $B$ of $A$, the time for computing $\Phi(B)$ given the $\Phi$ values of all $B$'s subregions is bounded by $O(p)$. The time for computing $\Phi(A)$ is therefore bounded by $O(pK_h)$, where $K_h$ is the total number of child regions of $A$ up to $h$ levels deep, which is bounded by $O(2p)^h$. According to Lemma~\ref{lem_count}, the time for counting the number of samples in child regions up to $h$ levels deep is bounded by $O(\sum_{i=0}^hnp^i)=O(np^h)$. The time complexity for expanding one node of the tree is therefore $O\{p(2p)^h+np^h\}=O\{(n+2^hp)p^h\}$. Since the final hMAP tree built by the LL-OPT algorithm has $O(n)$ nodes, the time complexity of the LL-OPT algorithm is $O\{n(n+2^hp)p^h\}$.
\end{proof}

Theorems~\ref{thm_OPT} and~\ref{thm_LL-OPT} show that the reduction on running time using the LL-OPT algorithm can be significant, especially when $n$ and $p$ are large. The amount of reduction is controlled by the tuning parameter $h$. The smaller the $h$, the faster the LL-OPT algorithm. However, a smaller $h$ also leads to less accurate estimation. The following adaptive approach can be used for selecting $h$: run the LL-OPT algorithm many times starting with $h=1$ and increase $h$ by 1 each time until a stopping decision is made (either based on a cost constraint or a criterion evaluating the gain between two successive estimates). Theorem~\ref{thm_LL-OPT} shows that the total time for running the LL-OPT algorithm for $h=1,\ldots,k-1$ is less than the run time for a single $h=k$.

\remove{
{\Large Chao's analysis}\\

In this section we study the time complexity for the exact OPT and the LL-OPT algorithms. To further simplify our analysis, we make the following assumptions.\\
{\em Assumption 1:} There are $J$ ways to partition each region $A$, and each way partitions $A$ into $I$ subregions.\\
{\em Assumption 2:} Different paths of partitioning never lead to identical regions.

\begin{lemma}
\label{lemma_comp}
Given assumptions 1 and 2, there are $J^k I^k$ regions in level $k$. All these level-$k$ regions can be grouped into $J^k$ sets, denoted by  $G_l^k$ $(1\leqslant l \leqslant J^k)$. Each set contains $I^k$ level-$k$ regions. The regions in the same set are mutually disjoint and form a partition of $\Omega$. 
\end{lemma}

Denote the number of observations as $N$. Suppose that the regions with no observations or only one observation can be treated as the terminal regions. We approximate the maximum depth of optional P{\'o}lya tree as $C_1\log_IN+C_2$ where $C_1=1,C_2=0$ if the observations are uniformly distributed, or  greater otherwise.

\begin{proposition}
\label{costopt}
The time complexity of the exact OPT algorithm is  $O(IJ^{C_2} N^{C_1\log_IJ+1})$ .
\end{proposition}

\begin{proposition}
\label{costllopt}
The time complexity of the LL-OPT algorithm with $q=1$ is $O(IJ^h(C_1\log_IN+C_2-h)N)$.
\end{proposition}

In propositions \ref{costopt} and \ref{costllopt}, the unnamed proportionality constants are the same. Consequently, we can compare the costs of those two algorithms directly. In terms of the sample size $N$, the computational cost of the LL-OPT algorithm is proportional to $N\log_IN$ rather than $N^{C_1\log_IJ+1}$ in the exact OPT algorithm. In terms of $J$, which represents the flexibility of the optional P{\'o}lya tree, the computational cost of the LL-OPT algorithm is only  $(C_1\log_IN+C_2-h)/J^{C_1\log_IN+C_2-h}$ of the cost of the exact OPT algorithm.  

\begin{proposition}
\label{costlloptbigq}
The time complexity of the LL-OPT algorithm with $q>1$ is $O(IJ^h(C_1\log_IN+C_2-h)N/q)$.
\end{proposition}

It must be noted, when we derive Proposition~\ref{costllopt}, we make the assumption that the maximum depth is always explored in the fast algorithm. However, as the fast algorithm makes early partition choice, the representative tree often stops before it reaches the deepest regions. Hence, the actual cost is overestimated in Proposition~\ref{costllopt}. The actual improvement is greater than the propositions suggest. 

In summary, the reduction on running time using the LL-OPT algorithm can be considerable, especially when $N$ and $J$ are large.  The magnitude of the reduction is controlled by the tuning parameter $h$. The smaller the $h$, the faster the algorithm. On the other hand, a smaller $h$ also means a less accurate estimation so a balanced choice is necessary.  This problem can be addressed with the following approach: Executing the fast algorithm with the value of $h$ starting as $1$ and incrementing by 1 every time until a stopping decision is made (either based on a cost constraint or a criterion evaluating the gain between two successive estimates).  

This approach can be justified by the following line of reasoning. According to Proposition~\ref{costllopt}, increasing the value of $h$ by 1 will increase the computational cost $J$ times.  Then the computational cost of the LL-OPT algorithm with $h=k$ is $J-1$ times greater than the combined cost with $h=1,2,\cdots, k-1$, not to mention the fact that the degree of overestimation bias in Proposition~\ref{costllopt} is greater for smaller $h$ (as the tree is more likely to stop earlier for smaller $h$). The extra computational cost with this approach thus is at most $1/(J-1)$ of the cost we pay if we know the ``right" $h$. With this acceptable extra cost, we can avoid the risk of additional computational cost for using too large an $h$ and the risk of loss of precision for using too small an $h$.

{\Large Chao's proofs}\\

\begin{proof}[Proof of Lemma~\ref{lemma_comp}]

We prove the lemma by induction. For $k=1$, all the level-$1$ regions are created through partitioning $\Omega$. The $l$th way of partitioning creates $I$ level-$1$ regions which constitutes set $G_l^1$.  It is clear that all the $I$ level-$1$ regions in $G_l^1$ are mutually disjoint and form a partition of $\Omega$. Assume our statement is also valid for the level-$k$ regions. For any set $G_l^k$ of level-$k$ regions, a set of level-$(k+1)$ regions can be created through partitioning each region in $G_l^k$ with the $j$th way of partitioning. As each region can be partitioned in $J$ different ways, we can create $J$ sets of such level-$(k+1)$ regions. Each new set is a partition of $\Omega$ and all the level-$(k+1)$ region within same set are mutually disjoint. In total, we can create $J^{k+1}$ sets of level-$(k+1)$ regions, which include all level-$(k+1)$ regions. Furthermore, there is no duplication in all the $I^{k+1}J^{k+1}$ level-$(k+1)$ regions due to assumption 2. Hence, our statement is also true for the level-$(k+1)$ regions. 
\end{proof}

\begin{proof}[Proof of Proposition~\ref{costopt}]
The optional P{\'o}lya tree based algorithm undertakes the following two computational tasks.\\
{\em Task A:} Decide the counts of observations within any region.\\
{\em Task B:} Calculate posterior parameters for each region based on the counts found in task A. 

For task A, note there are $I^k$ disjoint level-$k$ regions in each set $G_l^k$. Thus the computational cost of determining counts in each region of $G_l^k$ is proportional to $I^k N$. However, this cost can be reduced by trading storage for time. When we count the observations in regions of set $G_{l^*}^{k-1}$, the parenting set of $G_I^k$, we can store the indexes of observations in all regions of $G_{l^*}^{k-1}$. Then, for any region in $G_{l^*}^{k-1}$ with $n$ observations, the computational cost of determining the numbers of observations in its subregions with a certain way of partitioning is proportional to $In$. Consequently, the total computational complexity of determining the counts in set $G_I^k$ can be reduced to $IN$. As there are $J^k$ different sets, the total computational cost of task A for all level-$k$ regions is proportional to $J^k I N$.

As the maximum depth is represented by  $C_1\log_IN+C_2$, then the  computational complexity of task A can be measured by

\begin{align}
\label{costoptA}
\sum_{k=1}^{C_1\log_IN+C_2} J^k I N & =I \frac{J^{C_1\log_IN+C_2+1}-J}{J-1} N  \notag \\ & \approx IJ^{C_1\log_IN+C_2} N =IJ^{C_2} N^{C_1\log_IJ+1}.
\end{align}

The computational complexity of task A thus is a power function of $N$, with exponent $C_1\log_IJ+1$. In practice, it is generally desirable to use a large $J$ which offers more choices to explore the structure of the distribution and, at the same time, avoid large $I$ which leads to small regions containing few observations, a source of undesirable variation. We think it is reasonable to set $I\leqslant J$. The power law dependence on $N$ is then at least quadratic for a relatively uniform distribution with $C_1$ close to 1, or much higher for non-uniform distributions.

In task B, the computational complexity is roughly proportional to the product of  $IJ$, the computational complexity of calculating the posterior parameters for a single region, and the number of regions with at least one observation. Each set $G_l^k$ contains $I^k$ level-$k$ regions, but the number of regions with at least one observation cannot exceed $N$. Hence, the total number of level-$k$ regions with at least one observation is bounded by $J^k \min(I^k,N)$. The total computational complexity of task B can thus be estimated with

\begin{align}
\label{costoptB}
 IJ \sum_{k=0}^{C_1\log_IN+C_2} J^k\min(I^k,N) & =IJ \frac{(JI)^{\log_IN+1}-1}{JI-1} \notag \\ & +IJ N \frac{J^{\log_IN+1}(J^{(C_1-1)\log_IN+C_2}-1)}{J-1} \notag \\ &\approx  IJ^{C_1\log_IN+C_2} N = IJ^{C_2} N^{C_1\log_IJ+1}.
\end{align}

Adding up~(\ref{costoptA}) and~(\ref{costoptB}), we get the proposition.
\end{proof}

\begin{proof}[Proof of Proposition~\ref{costllopt}]

For the LL-OPT algorithm with $q=1$, the computational cost of task A involving level-$k$ ($1\leqslant k \leqslant h$) regions is the same as the exact OPT algorithm, that is, proportional to $J^k I N$. Before proceeding to level-$(h+1)$, we need to determine the tree topology for the support  $\Omega$, that is, it should  be partitioned by the $l$th way of partitioning. Consequently, only regions derived from the $l$th way of partitioning, or set $G_l^1$, are examined. Thus, only $1/J$ of the level-$(h+1)$ regions are involved in the fast algorithm. The computational complexity is reduced by the same factor, from $J^{h+1} I N$ to $J^h I N$. Similarly, the computational complexity for each level beyond $h$ is roughly a constant $J^h I N$.  The total computational cost of task A in our fast algorithm can therefore be estimated as
 
\begin{align}
\label{costlloptA}
\sum_{k=1}^{C_1\log_IN+C_2} J^{\min(k,h)} I N & =IN \frac{J^{h+1}-J}{J-1} +IN J^h (C_1\log_IN+C_2-h) \notag \\ & \approx IJ^h (C_1\log_IN+C_2-h+1)N.
\end{align}
Similarly, in task B, the number of level-$k$ regions involved when $k\leqslant h$ is still $J^k \min(I^k,N)$. Beyond level-$h$, the number of regions involved is represented by $J^h \min(I^k, N)$. However, every time a new level-$k$ is explored, we must recalculate the posterior parameters of the corresponding level-$(k-h)$ to level-$(k-1)$ regions. Then the total computational complexity should be estimated as

\begin{align}
\label{costlloptB}
\sum_{k=h}^{C_1\log_IN+C_2} \sum_{l=0}^{h-1} IJ J^l  \min(I^{k-h+l},N) & \leqslant \sum_{k=h}^{C_1\log_IN+C_2} IJ \frac{J^h-1}{J-1} \min(I^{k},N) \notag \\ & \approx  IJ^h(C_1\log_IN+C_2-h)N.
\end{align}

Adding up~(\ref{costlloptA}) and~(\ref{costlloptB}), we get the proposition.

\end{proof}

\begin{proof}[Proof of Proposition~\ref{costlloptbigq}]

For the LL-OPT algorithm when $1<q<h$, after  the computation in in level-$1$ to level-$h$ regions are done, we need to determine the tree topology for the root region $\Omega$, level-$1$ regions till level-$(q-1)$ regions. And then we will proceed to level-$(h+1)$ to level-$(h+q)$ and only consider the regions which are the offsprings of the tree topology determined.  Thus, the number of sets involved in level-$h+1$ to level-$(h+q)$ would be $J^{h-q+1}$,$\cdots, J^h$, respectively. As a result, the computational complexity for task A would be:

\begin{align}
\label{costlloptbigqA}
\sum_{k=1}^{h} J^{k} I N+\frac{C_1\log_IN+C_2-h}{q} (J^{h-q+1}+\cdots+J^h) IN\approx IJ^h \frac{C_1\log_IN+C_2-h+q}{q}N
\end{align}

Thus, for the same $h$, adapting a value $q>1$ will roughly reduce the computational cost to $1/q$ of the cost with $q=1$.

Similarly, the computational complexity related to task B can be estimated as:

\begin{align}
\label{costlloptbigqB}
\sum_{k=h,h+q,h+2q,\cdots}^{C_1\log_IN+C_2} \sum_{l=0}^{h-1} IJ J^l  \min(I^{k-h+l},N) & \leqslant \sum_{k=h,h+q,h+2q,\cdots}^{C_1\log_IN+C_2} IJ \frac{J^h-1}{J-1} \min(I^{k},N) \notag \\ & \approx (C_1\log_IN+C_2-h) IJ^hN/q
\end{align}

Adding up~(\ref{costlloptbigqA}) and~(\ref{costlloptbigqB}), we get the proposition.

\end{proof}
}

\section{Estimation of smooth density}

The densities estimated by OPT are piecewise constant and have discontinuities at boundaries between the pieces, which could become undesirable for applications where smooth densities are sought.\remove{Two approaches were suggested in~\citet{WongLi2010} for density estimation using OPT: (1) Compute the posterior mean density; (2) First learn a fixed tree topology that is representative of the underlying structure of the distribution, and then compute a piecewise constant estimate conditioned on the tree topology. Approach (2) is more attractive because its computational ease. However, the discontinuity of the density estimated by using the tree topology will hinder the application where continuity is required.  Furthermore, the piecewise constant estimates tend to flatten the peaks and valleys in the true density landscape, which deteriorate the estimation accuracy.} In this section we introduce a procedure which constructs a continuous piecewise linear approximation to the piecewise constant density estimated by OPT via quadratic programming. We call the continuous piecewise linear density estimator constructed using this approach the Finite Element Estimator (FEE), because it was inspired by the idea of domain partitioning in Finite Element Method (FEM)~\citep{Allaire}.

We define a proper space for continuous piecewise linear densities as follows. We start with the definition of a triangulation of the domain $\Omega$ (without loss of generality, we shall assume $\Omega$ is $[0, 1]^p$) by $p$-simplices. A $p$-simplex $\Delta$ in $\mathbb{R}^{p}$ is the convex envelop of its $p + 1$ vertices  $\{a_j\}_{1\leq j\leq p + 1}$, which is nondegenerate if the its vertices do not fall on a hyperplane in $\mathbb{R}^p$ and consequently the volume of $\Delta$, denoted as $\mu(\Delta)$, is not zero, which we shall always assume in what follows.
 \begin{definition}
   Let $\Omega = [0, 1]^p$, a triangulation of $\Omega$ is a set $\Gamma$ of $p$-simplices $\{\Delta_i\}_{1\leq i\leq m}$ satisfying
   \begin{itemize}
     \item[(i)] $\Delta_i\subset\Omega$ $(i=1\ldots m)$ and $\Omega=\cup_{i=1}^m\Delta_i$
     \item[(ii)] For any two distinct simplices $\Delta_i$ and $\Delta_j$, if they share $k~(1\leq k\leq p)$ vertices, $\Delta_i\cap \Delta_j$ is a $(k-1)$-simplex formed by these $k$ vertices, or an empty set otherwise. This condition guarantees that no vertex of $\Delta_i$ lies on the faces of $\Delta_j$ and vice versa. This can help exclude the undesired situations as shown in Figure~\ref{undesire}.
   \end{itemize}
 \end{definition}
 \begin{figure}[!htb]
   \center
   \includegraphics[scale=.3]{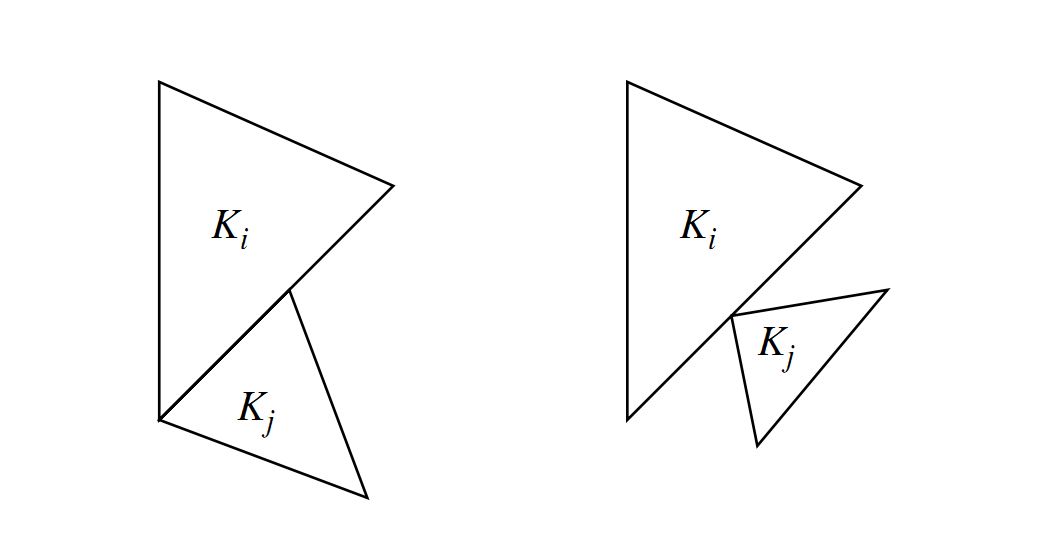}
   \caption{Examples of undesired situations for a triangulation.\label{undesire}}
 \end{figure}

\begin{definition}
For an OPT partition $\Omega=\bigcup_{i=1}^lA_i$, a triangulation of the OPT partition is a triangulation $\Gamma$ of $\Omega$ where each region $A_i$ is also triangulated by a subset of $\Gamma$. See Figure \ref{trian} for an example.
\end{definition}
 \begin{figure}[!htb]
  \center
  \includegraphics[scale=.8]{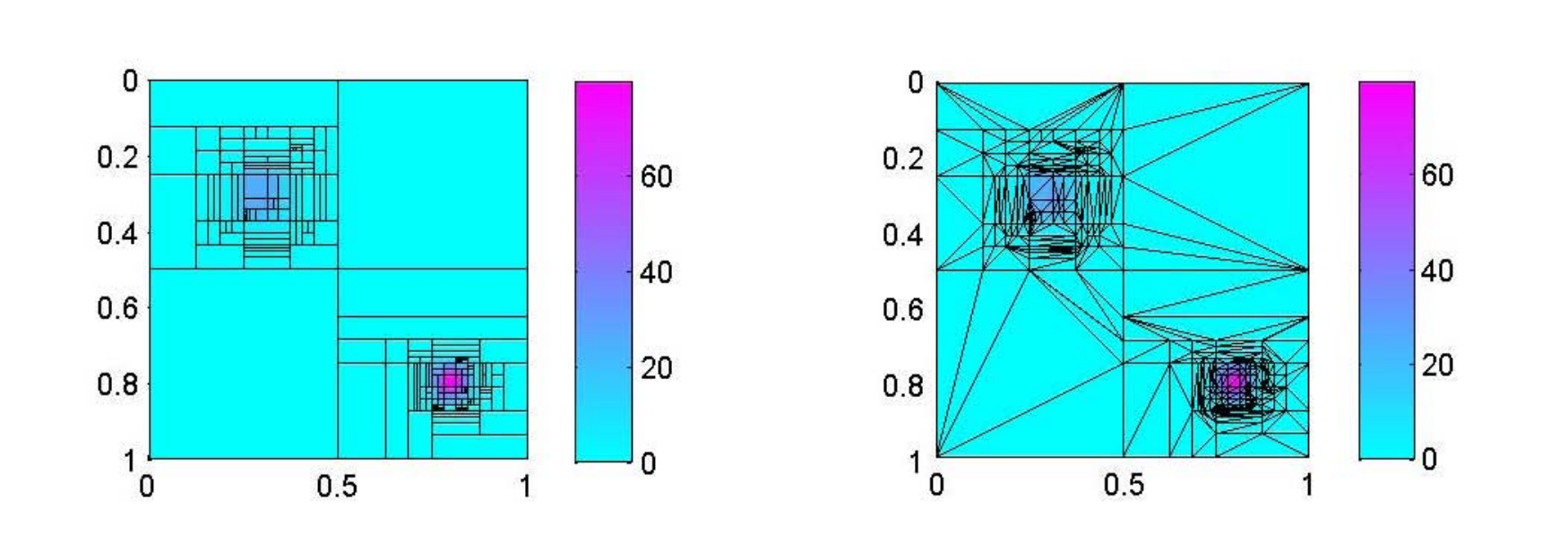}
   \caption{A OPT partition in 2-D (left) and its triangulation (right).\label{trian}}
\end{figure}

\begin{definition}
Given a triangulation $\Gamma$ of an OPT partition of $\Omega$, a function $f:\Omega\rightarrow\mathbb{R}$ is called continuous piecewise linear over $\Gamma$ if $f$ is linear within each simplex in $\Gamma$. The set of continuous piecewise linear functions over $\Gamma$ is denoted as  $\mathcal{P}_1(\Gamma)$.
\end{definition}

Let the set $\{a_1\ldots a_k\}$ be all the vertices in $\Gamma$, we define $k$ continuous piecewise linear functions as $\phi_i(a_j) = \delta_{ij}$ for all $1\leq i, j\leq k$. In another word, $\phi_i$ is $1$ at $a_i$ and $0$ at all other vertices. The set of functions $\phi_i~(i=1\ldots k)$ forms a basis for all continuous piecewise linear functions over $\Gamma$~\citep{Allaire}. One such basis function is shown in~Figure~\ref{basis}. 
\begin{figure}[!htb]
  \center
  \includegraphics[scale=.8]{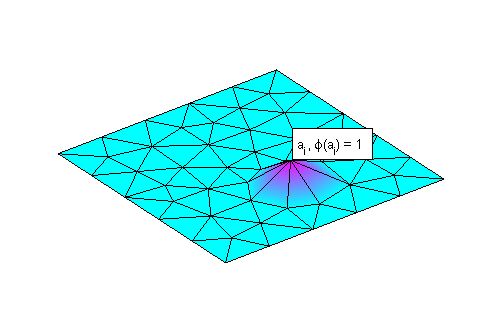}
  \caption{A continuous piecewise linear basis function.\label{basis}}
\end{figure}

Since the piecewise constant density estimated by OPT is a good approximation to the underlying distribution when the sample size is large, guaranteed by the consistency of OPT estimates~\citep[Theorem 4]{WongLi2010}, we aim at achieving the following two properties with the continuous piecewise linear FEE.
\begin{enumerate}
  \item (fidelity property) The difference between the piecewise constant density and the continuous piecewise linear density must be small.
  \item (smoothness property) The variation of the continuous piecewise linear density within each region of the OPT partition must be small.
\end{enumerate}

Unfortunately, in practice, these two properties often contradict each other. Therefore, we use penalized optimization to achieve a trade-off. Let $f\in\mathcal{P}_1(\Gamma)$ be a density function which is continuous piecewise linear over $\Gamma$, the fidelity property can be enforced by the following penalty $p(\cdot)$
$$p(f) = \sum_{j = 1}^mw_j\left[\int_{\Delta_j}f(x)dx - Q(\Delta_j)\right]^2,$$
where $\Delta_j$ is the $j$-th simplex in $\Gamma$, $Q(\Delta_j)$ is the probability mass assigned in $\Delta_j$ by the piecewise constant OPT density estimate and $w_j$ is a pre-specified weight depending on the importance of simplex $j$. In our simulations, we find that choosing $w_j$ to be $\mu(\Delta_j)^{-2}$ provides good results by keeping a balance between large and small simplices.

The smoothness property implies that the density function in any given simplex should be as flat as possible. Therefore, in a density plot (see Figure~\ref{faces} for an example), the volume of the simplex on the top, denoted as $\mu^*(\Delta_j)$, should be approximately equal to the volume of the simplex on the bottom, which is $\mu(\Delta_j)$. Therefore, the smoothness property can be enforced by the following penalty $q(\cdot)$
$$ q(f) = \sum_{i=1}^m\left[\frac{\mu^*(\Delta_j)}{\mu(\Delta_j)}\right]^2.$$
\begin{figure}[!htb]
  \center
  \includegraphics[scale=0.5]{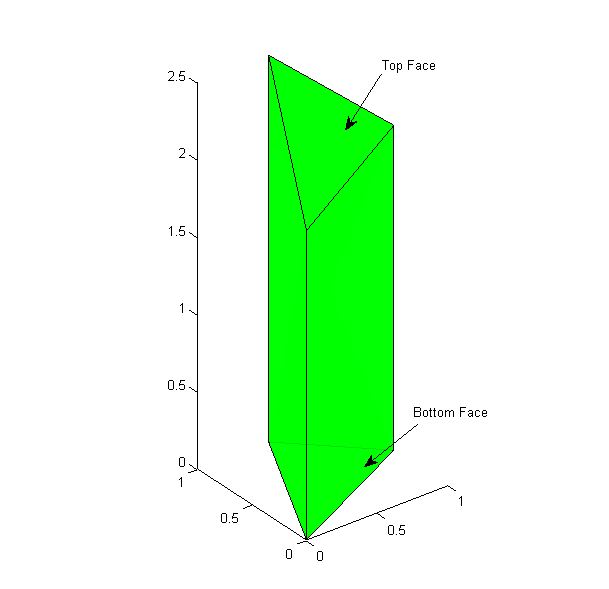}
  \caption{The bottom face (the simplex $\Delta_j$) and top face in a density plot.\label{faces}}
\end{figure}

To enforce both properties simultaneously, we minimize $h(f) = p(f) + \lambda q(f)$, where $\lambda$ is a tuning parameter. Therefore, the continuous piecewise linear density estimate $f$ can be obtained by solving the following optimization problem.
\begin{equation}
\label{eq2}
\begin{split}
&f=\argmin_{f\in\mathcal{P}_1(\Gamma)}\sum_{j = 1}^mw_j\left[\int_{\Delta_j}f(x)dx - Q(\Delta_j)\right]^2 + \lambda\sum_{i=1}^m\left[\frac{\mu^*(\Delta_j)}{\mu(\Delta_j)}\right]^2,\\
&\textrm{subject to }   f(x) \geq 0 \textrm{ for all } x\in\Omega \textrm{ and }   \sum_{i = 1}^m \int_{\Delta_i}f(x)dx = \int_{\Omega}f(x)dx = 1.
\end{split}
\end{equation}
Within a given simplex $\Delta_j$ with vertices $\{a_1, \ldots, a_{p + 1}\}$, and $\{c_1, \ldots, c_{p + 1}\}$ = $\{f(a_1),\ldots,f(a_{p+1})\}$ are the corresponding densities on the vertices, $f(x)$ can be written as a linear combination of the basis functions
$$  f(x) = \sum_{i = 1}^{p+1} c_i\phi_i(x),$$
where $c_i \geq 0$. We have
\[\mu(\Delta_j) = \frac{1}{p!}\left|\det\left(\begin{array}{ccccc}
a_{1, 1} & a_{1, 2} & \ldots& a_{1, p} & 1\\
a_{2, 1} & a_{2, 2} & \ldots & a_{2, p} & 1\\
&&\ldots&&\\
a_{p + 1, 1} & a_{p + 1, 2} & \ldots & a_{p + 1, p} & 1\\
\end{array}\right)\right|,\]
\[\int_{\Delta_j}f(x)dx = \frac{\mu(\Delta_j)}{p + 1}(c_1 + \cdots + c_{p + 1}).\]
and
$$\mu^*(\Delta_j) = \frac{1}{d!}\left\|\det\left(\begin{array}{ccccc}
  \mathbf{i}_1 & \ldots & \mathbf{i}_p & \mathbf{i}_{p + 1} \\
  a_{2,1}-a_{1,1}  & \ldots & a_{2,p}-a_{1,p} & c_2 - c_1 \\
  &&\ldots&&\\
  a_{p+1,1}-a_{1,1} & \ldots & a_{p+1,p}-a_{1,p} & c_{p+1} - c_1 \\
\end{array}\right)\right\|_2
$$
where $\|\cdot\|_2$ is the Euclidean norm and $\mathbf{i}_j$~$(j = 1,\ldots, p + 1)$ are the natural basis of $\mathbb{R}^{p+1}$. Since $\int_{\Delta_j}f(x)dx$ is a linear function of $c_1, \ldots, c_n$ and $\mu^*(\Delta_j)^2$ is a quadratic function of $c_1, \ldots, c_n$,~(\ref{eq2}) can be solved by quadratic programming.

\section{Simulations}
\label{sec:simulations}

\subsection{LL-OPT}

We compare the running times and the estimation accuracies of the exact OPT and the LL-OPT algorithms. In our simulations, the root region $\Omega$ is the unit square $[0,1]\times [0,1]$ on $\mathbb{R}^2$. The exact OPT algorithm and the LL-OPT algorithm are used to estimate the density based on the simulated samples with different sizes. The Hellinger distance $H(f,g)=\{\frac{1}{2} \int (f(x)^{1/2}-g(x)^{1/2})^2dx\}^{1/2}$ was used as the metric to evaluate the accuracy of density 
estimates.

\begin{example}
\label{beta_mixture_example}
We consider a mixture distribution of two independent components over the unit square $[0,1]\times [0,1]$~\citep[Example 8]{WongLi2010}. The first component is a uniform distribution over $[0.78, 0.80]\times[0.2, 0.8]$. The second component has support $[0.25, 0.4]\times[0, 1]$ with $X$ being uniform over $[0.25,0.4]$ and $Y$ being $Beta(100, 120)$. The density function is therefore
\begin{equation}
\label{semibeta} 
\frac{0.35}{0.012} \times \boldsymbol{1}_{[0.78, 0.80]\times[0.2, 0.8]}+\frac{0.65}{0.15}\times \frac{\Gamma(220)}{\Gamma(120)\Gamma(100)} y^{99} (1-y)^{119} \boldsymbol{1}_{[0.25,0.4]\times[0, 1]}.
\end{equation}
\end{example}

Figure~\ref{semibeta1000} shows the partition plots based on the exact OPT and the LL-OPT algorithms with $h=1,\ldots,5$, respectively, for a sample size of 1000. The plots show how the root region $\Omega$ is partitioned according to the estimated tree topologies. A good partition plot should visually resemble a contour plot of the distribution to be estimated. On these partition plots, the two components of the mixture are clearly marked out by the two vertical strips located on the left-half and the right-half of the square, respectively. Furthermore, the non-uniformity of the Beta distribution in the left component is also visible through the further division of the corresponding strip.  

\begin{figure}[!htb]
\begin{center}
\includegraphics[scale=0.6]{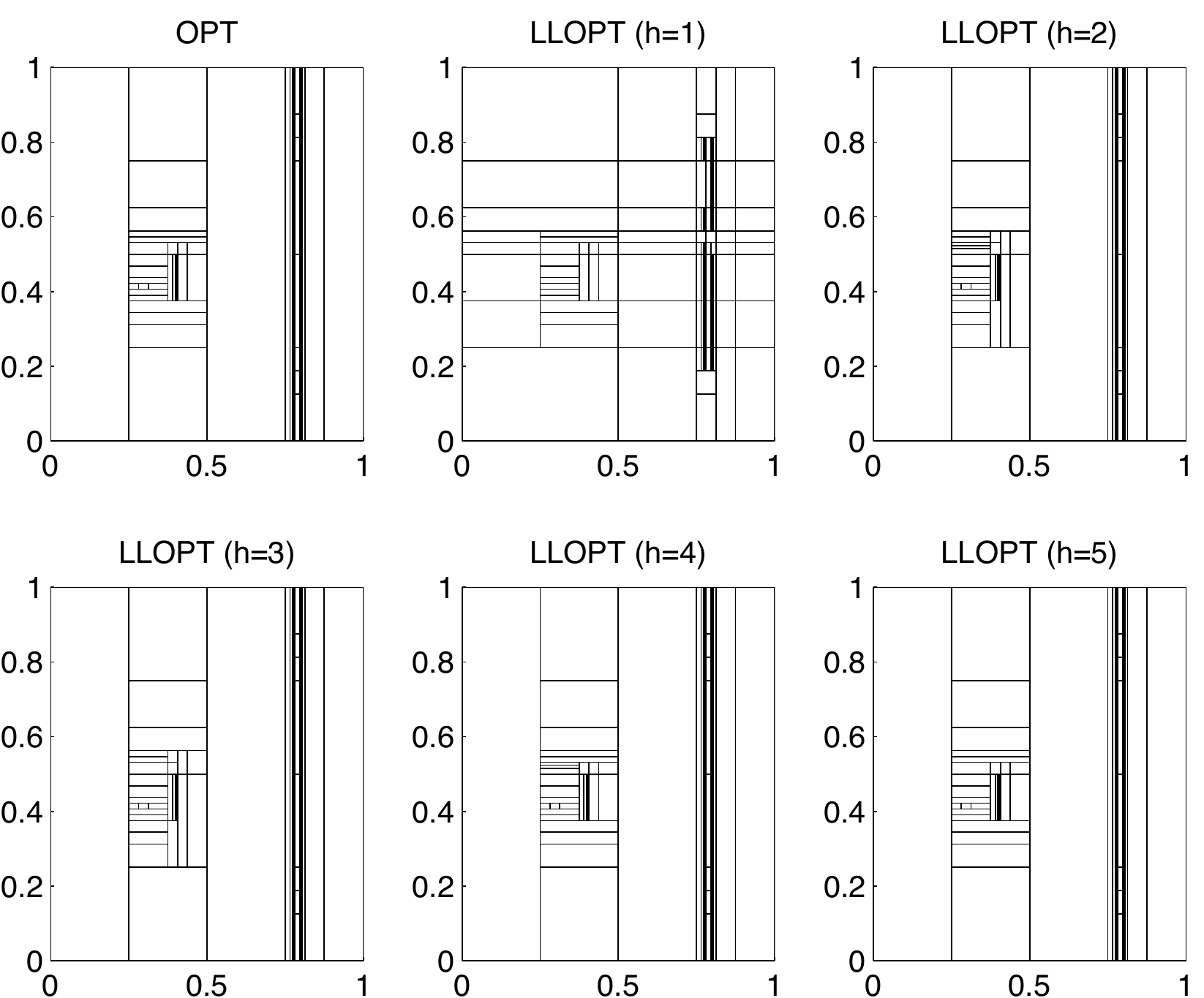}
\caption{The partition plots based on the exact OPT and the LL-OPT algorithms with $h=1,\ldots,5$ for Example~\ref{beta_mixture_example}. The samples are drawn from a mixture of uniform and ``semi-beta'' distributions as defined by~(\ref{semibeta}). The sample size is 1000.}
\label{semibeta1000}
\end{center}
\end{figure}

Comparing the partition plots between the exact OPT and the LL-OPT algorithms, the partition estimated by the LL-OPT algorithm, with $h$ as small as 2, resembles closely the partition estimated by the exact OPT algorithm with only minor differences. As $h$ increases, the resemblance becomes even stronger, which is expected as the two algorithms should lead to exactly the same result when $h$ is greater than the maximum depth. 

\remove{There exist noticeable differences between the partitions estimated by the exact OPT algorithm and the LL-OPT algorithm for $h=1$. The exact OPT algorithm partitions the root region at $x=1/2$, while the tree topology suggested by the LL-OPT algorithm partitions the root region at $y=1/2$. Generally speaking, algorithms based on the framework of optional P{\'o}lya tree prefer the partitioning scheme maximizing the difference between subregions. As the two components of the mixture distribution are separately located in the left and right part of the unit square~\ref{semibeta}, it is expected that root region should be split at $x=1/2$. However, in the LL-OPT algorithm with $h=1$, only  level-$1$ regions are considered for the purpose of determining the partitioning scheme of the root region. With respect to the true density, the probability masses in the left and right half are 0.65 and 0.35, respectively, while the masses are roughly 0.77 and 0.23 for the lower and upper half. Thus, rather than partitioning the root region at $x=1/2$, the fast algorithm with $h=1$ partitions the root region at $y=1/2$. This issue disappears as long as the deeper structures are taken into consideration ($h>1$).}

The average running times used by the exact OPT algorithm and the LL-OPT algorithm with different $h$ are summarized in Table~\ref{table_semibeta_cost}, and the average Hellinger distances (computed based $2\times10^6$ importance samples) between the estimated and the true densities are summarized in Table~\ref{table_semibeta_dis}. All the summary statistics are based on 5 replications and corresponding standard deviations are included in the parentheses.

\begin{table}[!htb]
\begin{center}
\caption{Running times (in seconds) for Example~\ref{beta_mixture_example}. Average of 5 replicates, standard deviation reported in parentheses. The DF-OPT algorithm could not finish the two large-sample simulations within reasonable time. \label{table_semibeta_cost}}
\begin{tabular}{cccccccc} 
\hline
\multirow{2}{1.5cm}{Sample Size} & \multirow{2}{1.5cm}{OPT} & \multirow{2}{1.8cm}{DF-OPT} &\multicolumn{5}{c}{LL-OPT} \\ \cline{4-8}
& & & $h$=1 & $h$=2 & $h$=3 & $h$=4 & $h$=5 \\ \hline
\multirow{2}{1.2cm}{$10^2$}
& 0.330 & 5.856 & 0.383 & 0.386 & 0.390 & 0.394 & 0.402 \\
& (0.004) & (5.114) & (0.619) & (0.621) & (0.624) & (0.628) & (0.634)  \\
\multirow{2}{1.2cm}{$10^3$}
& 0.511 & 1229.572 & 0.404 & 0.418 & 0.434 & 0.472 & 0.559 \\
& (0.008) & (876.831) & (0.635) & (0.647) & (0.659) & (0.687) & (0.748) \\
\multirow{2}{1.2cm}{$10^4$}
& 3.442 & * & 0.499 & 0.621 & 0.789 & 1.199 & 1.948 \\
& (0.039) & * & (0.707) & (0.788) & (0.889) & (1.095) & (1.396)  \\
\multirow{2}{1.2cm}{$10^5$}
& 47.955 & * & 1.885 & 3.133 & 4.946 & 9.149 & 16.354 \\
& (0.121) & * & (1.373) & (1.770) & (2.224) & (3.025) & (4.044) \\
\hline
\end{tabular}
\end{center}
\end{table}

\begin{table}[!htb]
\begin{center}
\caption{Estimation errors (in Hellinger distance) for Example~\ref{beta_mixture_example}. Average of 5 replicates, standard deviation reported in parentheses. The DF-OPT algorithm could not finish the two large-sample simulations within reasonable time. \label{table_semibeta_dis}}
\begin{tabular}{cccccccc} 
\hline
\multirow{2}{1.5cm}{Sample Size} & \multirow{2}{1.5cm}{OPT} & \multirow{2}{1.8cm}{DF-OPT} &\multicolumn{5}{c}{LL-OPT} \\ \cline{4-8}
& & & $h$=1 & $h$=2 & $h$=3 & $h$=4 & $h$=5 \\ \hline
\multirow{2}{1.2cm}{$10^2$}
& 0.381 & 0.381 & 0.451 & 0.384 & 0.392 & 0.380 & 0.388 \\
& (0.048) & (0.048) & (0.056) & (0.041) & (0.045) & (0.046) & (0.050) \\
\multirow{2}{1.2cm}{$10^3$}
& 0.183 & 0.183 & 0.205 & 0.183 & 0.185 & 0.180 & 0.183 \\
& (0.012) & (0.012) & (0.013) & (0.009) & (0.012) & (0.012) & (0.012) \\
\multirow{2}{1.2cm}{$10^4$}
& 0.081 & * & 0.102 & 0.081 & 0.082 & 0.082 & 0.081 \\
& (0.004) & * & (0.008) & (0.004) & (0.004) & (0.004) & (0.003) \\
\multirow{2}{1.2cm}{$10^5$}
& 0.034 & * & 0.045 & 0.035 & 0.034 & 0.034 & 0.034 \\
& (0.001) & * & (0.003) & (0.002) & (0.001) & (0.001) & (0.001) \\
\hline
\end{tabular}
\end{center}
\end{table}

\remove{
\begin{table}
\begin{tabular}{ c | c | c | c | c | c | c | c | c } 
\multirow{2}{1.5cm}{Sample Size} & \multirow{2}{1.5cm}{OPT} & \multirow{2}{1.8cm}{DF-OPT} &\multicolumn{6}{|c}{LL-OPT} \\ \cline{4-9}
& & & $h$=1 & $h$=2 & $h$=3 & $h$=4 & $h$=5 & $h$=6 \\ \hline\hline
\multirow{2}{1.2cm}{$10^2$}
& 0.381 & 0.381 & 0.451 & 0.384 & 0.392 & 0.380 & 0.388 & 0.383 \\
& (0.048) & (0.048) & (0.056) & (0.041) & (0.045) & (0.046) & (0.050) & (0.045) \\
\multirow{2}{1.2cm}{$10^3$}
& 0.183 & 0.183 & 0.205 & 0.183 & 0.185 & 0.180 & 0.183 & 0.183 \\
& (0.012) & (0.012) & (0.013) & (0.009) & (0.012) & (0.012) & (0.012) & (0.012) \\
\multirow{2}{1.2cm}{$10^4$}
& 0.081 & * & 0.102 & 0.081 & 0.082 & 0.082 & 0.081 & 0.081 \\
& (0.004) & * & (0.008) & (0.004) & (0.004) & (0.004) & (0.003) & (0.003) \\
\multirow{2}{1.2cm}{$10^5$}
& 0.034 & * & 0.045 & 0.035 & 0.034 & 0.034 & 0.034 & 0.034 \\
& (0.001) & * & (0.003) & (0.002) & (0.001) & (0.001) & (0.001) & (0.001) \\
\end{tabular}
\caption{Beta mixture hellinger distance. Average of 5 replicates, standard deviation reported in parentheses. DF-OPT could not finish the 100000 sample example within reasonable time. }
\label{table_semibeta_dis}
\end{table}
}

\remove{
\begin{table}
\begin{center}
\begin{tabular}{ c | c | c | c | c | c | c | c | c } 
\multirow{2}{1.5cm}{Sample Size} & \multirow{2}{1.5cm}{OPT} & \multirow{2}{1.8cm}{DF-OPT} &\multicolumn{6}{|c}{LL-OPT} \\ \cline{4-9}
& & & $h$=1 & $h$=2 & $h$=3 & $h$=4 & $h$=5 & $h$=6 \\ \hline\hline
\multirow{2}{1.2cm}{$10^2$}
& 0.330 & 5.856 & 0.383 & 0.386 & 0.390 & 0.394 & 0.402 & 0.421 \\
& (0.004) & (5.114) & (0.619) & (0.621) & (0.624) & (0.628) & (0.634) & (0.649) \\
\multirow{2}{1.2cm}{$10^3$}
& 0.511 & 1229.572 & 0.404 & 0.418 & 0.434 & 0.472 & 0.559 & 0.738 \\
& (0.008) & (876.831) & (0.635) & (0.647) & (0.659) & (0.687) & (0.748) & (0.859) \\
\multirow{2}{1.2cm}{$10^4$}
& 3.442 & * & 0.499 & 0.621 & 0.789 & 1.199 & 1.948 & 3.486 \\
& (0.039) & * & (0.707) & (0.788) & (0.889) & (1.095) & (1.396) & (1.867) \\
\multirow{2}{1.2cm}{$10^5$}
& 47.955 & * & 1.885 & 3.133 & 4.946 & 9.149 & 16.354 & 30.842 \\
& (0.121) & * & (1.373) & (1.770) & (2.224) & (3.025) & (4.044) & (5.554) \\
\end{tabular}
\caption{Beta mixture running times for 1000 samples in seconds. Average of 5 replicates, standard deviation reported in parentheses. DF-OPT could not finish the 100000 sample example within reasonable time. }
\label{table_semibeta_cost}
\end{center}
\end{table}
}

\remove{
\begin{table}[!htb]
\begin{center}
\begin{tabular}{| c | c | c | c | c | c | c | c |} 
\hline
\multirow{2}{1.5cm}{Sample Size} & \multirow{2}{1.2cm}{OPT} 
&\multicolumn{6}{|c|}{LL-OPT} \\ \cline{3-8}
& & h=1 & h=2 & h=3 & h=4 & h=5& h=6 \\ \hline
\multirow{2}{1.2cm}{N=100} 
& 3.56 & 0.06& 0.15 & 0.35 & 0.71 & 1.33 & 2.16 \\ 
& (0.29) & (0.02) & (0.03) & (0.08) & (0.13) & (0.21)  & (0.27)\\  \hline
\multirow{2}{1.2cm}{N=250} 
& 20.62 & 0.11 & 0.26 & 0.68 & 1.60 & 3.44 & 6.50 \\ 
& (0.84) & (0.02) & (0.04) & (0.10) & (0.20) & (0.36)  & (0.57)\\  \hline
\multirow{2}{1.2cm}{N=500} 
& 95.20 & 0.15 & 0.36 & 1.01 & 2.69 & 6.72 & 14.94 \\ 
& (4.68) & (0.02) & (0.05) & (0.15) & (0.40) & (0.94)  & (2.00)\\  \hline
\multirow{2}{1.2cm}{N=1000} 
& 464.16 & 0.20& 0.54 & 1.64 & 4.76 & 13.51 & 35.77 \\ 
& (55.41) & (0.03) & (0.10) & (0.28) & (0.85) & (2.59)  & (7.44)\\  \hline
\end{tabular}
\end{center}
  \caption{The running times (in seconds) of executing different OPT algorithms. The samples are drawn from a mixture of Uniform and "Semi-Beta" as defined by~(\ref{semibeta}). }
\label{table_semibeta_cost}
\end{table}
}

\remove{
\begin{table}[!htb]
\begin{center}
\begin{tabular}{| c | c | c | c| c | c | c |c|} 
\hline
\multirow{2}{1.5cm}{Sample Size} & \multirow{2}{1.2cm}{OPT} 
&\multicolumn{6}{|c|}{LL-OPT} \\ \cline{3-8}
& & h=1 & h=2 & h=3 & h=4 & h=5& h=6 \\ \hline
\multirow{2}{1.2cm}{N=100} 
& 0.1462 & 0.2502 & 0.1503 & 0.1470 & 0.1463 & 0.1461 & 0.1463 \\ 
& (0.0197) & (0.0515) & (0.0206) & (0.0195) & (0.0194) & (0.0199)  & (0.0198)\\  \hline
\multirow{2}{1.2cm}{N=250} 
& 0.0782 & 0.1226 & 0.0790 & 0.0782 & 0.0780 & 0.0782 & 0.0783 \\ 
& (0.0074) & (0.0380) & (0.0101) & (0.0078) & (0.0075) & (0.0075)  & (0.0074)\\  \hline
\multirow{2}{1.2cm}{N=500} 
& 0.0539 & 0.0736 & 0.0547 & 0.0546 & 0.0542 & 0.0538 & 0.0540 \\ 
& (0.0070) & (0.0166) & (0.0077) & (0.0070) & (0.0073) & (0.0070)  & (0.0071)\\  \hline
\multirow{2}{1.2cm}{N=1000} 
& 0.0344 & 0.0462 & 0.0373 & 0.0352 & 0.0341 & 0.0342 & 0.0342 \\ 
& (0.0053) & (0.0059) & (0.0067) & (0.0060) & (0.0052) & (0.0053)  & (0.0048)\\  \hline
\end{tabular}
\end{center}
  \caption{The Hellinger distances between the estimated and true densities.  The samples are drawn from a mixture of Uniform and "Semi-Beta" as defined by~(\ref{semibeta}). }
\label{table_semibeta_dis}
\end{table}
}

Based on Table~\ref{table_semibeta_cost}, we find that the LL-OPT algorithm achieves substantial speedup compared with the exact OPT algorithm, especially when the sample size is large. The other algorithm for exact OPT inference, the DF-OPT algorithm, is computationally prohibitive for large sample size.

\remove{
Based on Table~\ref{table_semibeta_cost}, we examine our time complexity analysis in Section~\ref{complexity}. According to Theorem~\ref{thm_OPT}, the computational cost of the exact OPT algorithm is $O(pn^{\log_{1/t}2p})$, which is $O(n^{\log_{1/t}4})$ for $p=2$, while the data in Table~\ref{table_semibeta_cost} suggest that the computational complexity is roughly proportional to $N^{2.12}$, which corresponds to $t=0.52$. 2) According to Theorem~\ref{thm_LL-OPT}, for fixed $h$, $p$ and $r$, the computational cost of the LL-OPT algorithm is $f(n)=O\{(n+2^hp)p^h2^r\}=O(n)$, which matches the simulation results. For fixed $n$, $p$ and $r$, the time complexity can be written as $O(p^h)=O(2^h)$ since $n>2^hp$ in this case, which is consistent with Table~\ref{table_semibeta_cost} since the computational cost of the LL-OPT algorithm is roughly doubled when the value of $h$ is increased by 1.}

Based on Table~\ref{table_semibeta_dis}, the accuracy of estimates of the LL-OPT algorithm is comparable to that of the exact OPT algorithm, with the value of $h$ as low as 2, which is encouraging since the improvement on efficiency is sizable. For example, when $h=2$, the LL-OPT algorithm is $15$ times faster than the exact OPT algorithm for sample size of $10^5$. This makes the LL-OPT algorithm appealing for large samples. Furthermore, the estimates of the LL-OPT algorithms with $h \geqslant 2$ are almost identical, so the adaptive approach for selecting $h$ would work well. 

\begin{example}
\label{bivarirate_normal_example}
We now consider a distribution in $[0,1]^4$ with a bivariate Normal distribution in the first two dimensions and uniform in the other two dimensions. In this scenario, $X$ and $Y$ components are independent of each other and follow $Norm(0.6,0.1^2)$ and $Norm(0.4,0.1^2)$ respectively~\citep[adapted from Example 9]{WongLi2010}, and $Z$ and $W$ components are independent and uniform in $[0,1]$. As the probability mass of this distribution is essentially all located in $[0,1]\times [0,1]$, the density function can be approximated by
\begin{equation}
\label{bnorm} 
\frac{1}{2\pi \times 0.1^2} e^{-\frac{(x-0.6)^2+(y-0.4)^2}{2\times 0.1^2}}.
\end{equation}
\end{example}

\remove{Figure~\ref{bnorm1000} shows the partition maps based on the exact OPT algorithm and the LL-OPT algorithm with $h=1,\ldots, 6$ for a sample of size 1000. } The average running times and average Hellinger distances are summarized in Tables~\ref{table_bnorm_cost} and~\ref{table_bnorm_dis} respectively.  All the summary statistics are based on 5 replications and corresponding standard deviations are included in the parentheses.

\remove{
\begin{figure}[!htb]
\begin{center}
\includegraphics[scale=0.45]{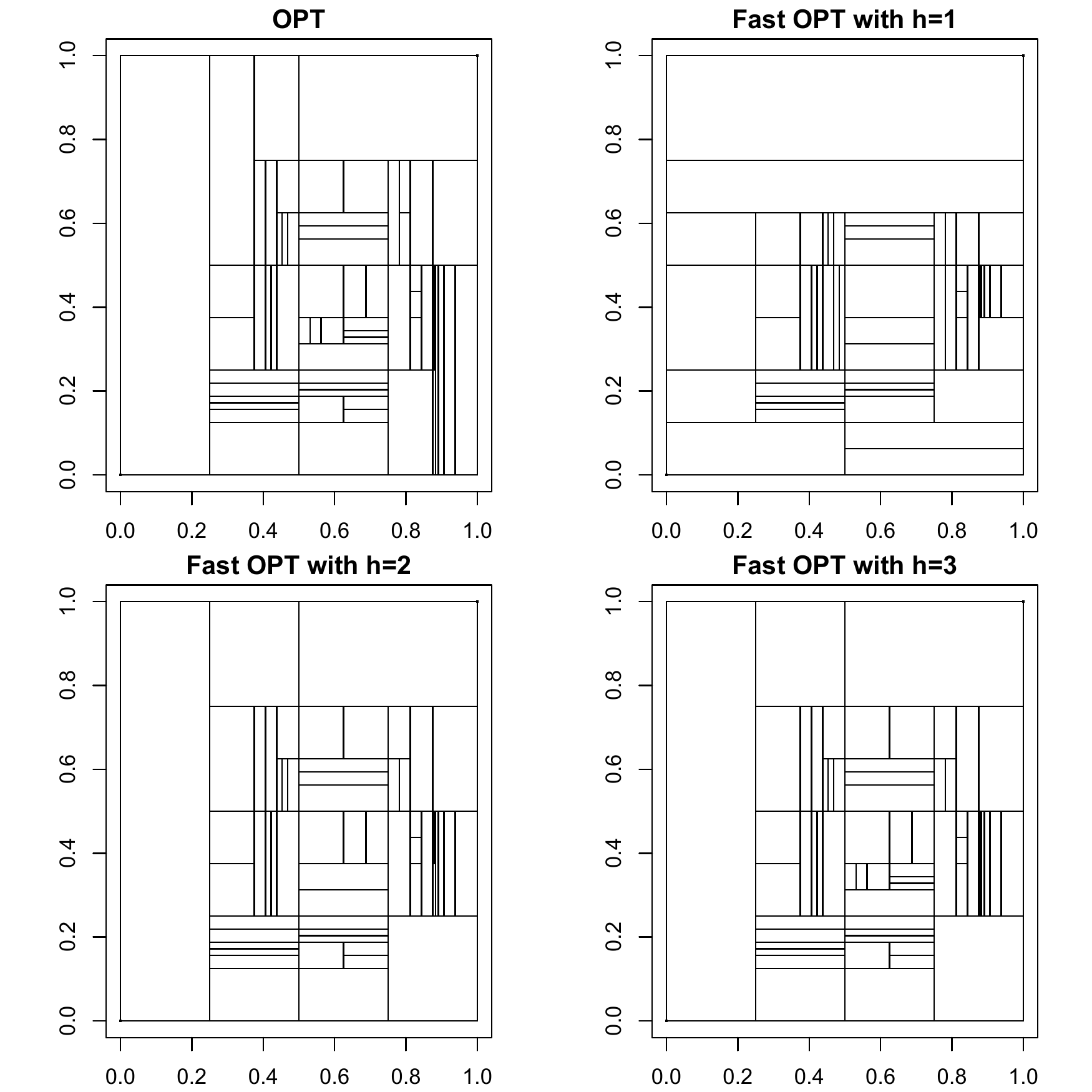}\includegraphics[scale=0.45]{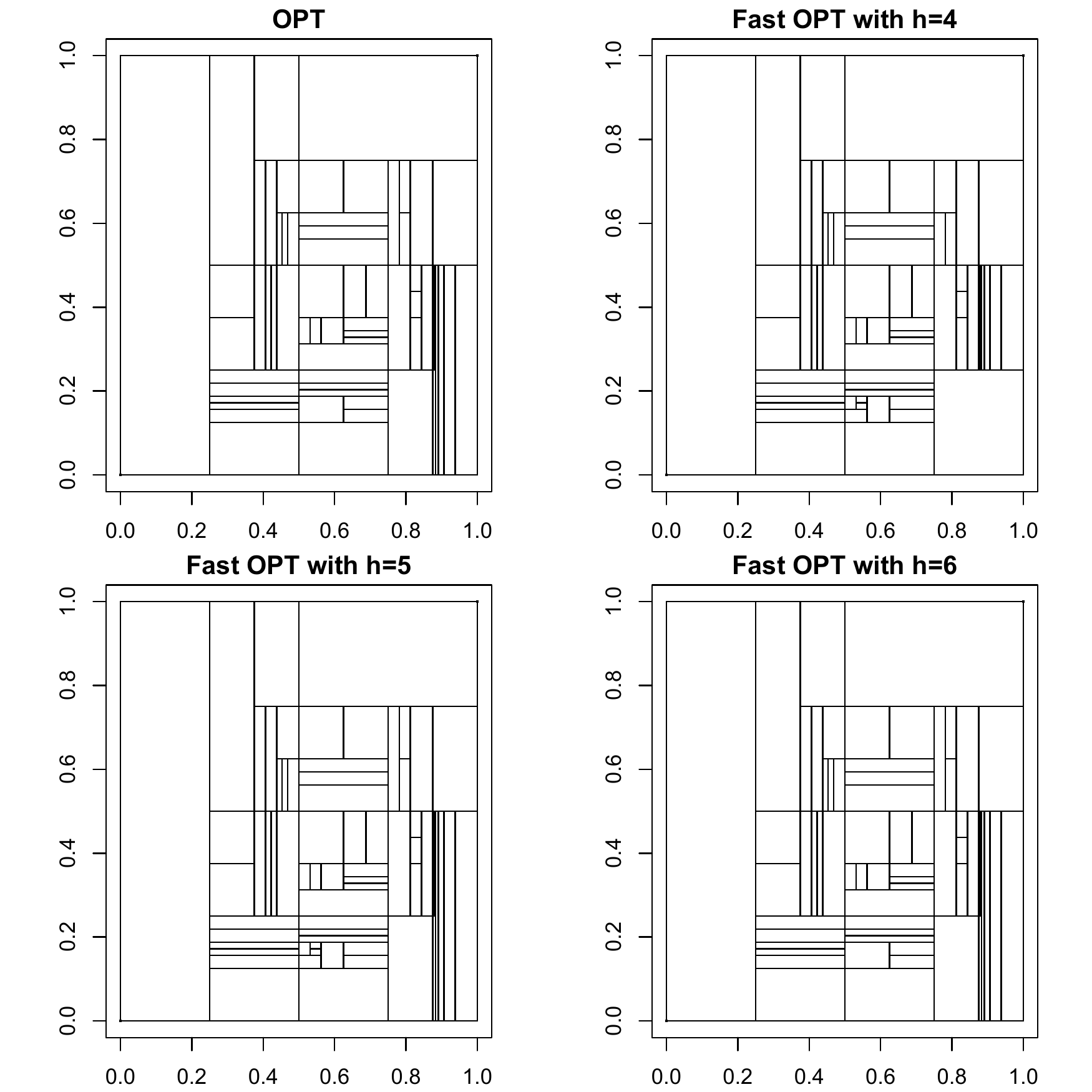}
\caption{The partition maps based on the exact OPT and the LL-OPT algorithms with $h=1,\ldots,6$. The samples are drawn from a bivariate Normal distribution as defined by~(\ref{bnorm}). Sample size is 1000.}
\label{bnorm1000}
\end{center}
\end{figure}
}

\remove{
\begin{table}[!htb]
\begin{center}
\begin{tabular}{| c | c | c | c| c | c | c |c|}
\hline
\multirow{2}{1.5cm}{Sample Size} & \multirow{2}{1.2cm}{OPT} 
&\multicolumn{6}{|c|}{LL-OPT} \\ \cline{3-8}
& & h=1 & h=2 & h=3 & h=4 & h=5& h=6 \\ \hline
\multirow{2}{1.2cm}{N=100} 
& 2.98 & 0.03& 0.11 & 0.25 & 0.51 & 0.95 & 1.56 \\ 
& (0.33) & (0.01) & (0.02) & (0.05) & (0.10) & (0.16)  & (0.21)\\  \hline
\multirow{2}{1.2cm}{N=250} 
& 17.53 & 0.05 & 0.19 & 0.47 & 1.06 & 2.19 & 4.25 \\ 
& (0.83) & (0.01) & (0.04) & (0.10) & (0.24) & (0.48)  & (0.85)\\  \hline
\multirow{2}{1.2cm}{N=500} 
& 79.52 & 0.08 & 0.30 & 0.86 & 2.19 & 5.28 & 11.56 \\ 
& (3.98) & (0.01) & (0.05) & (0.16) & (0.44) & (1.17)  & (2.58)\\  \hline
\multirow{2}{1.2cm}{N=1000} 
& 363.15 & 0.11& 0.44 & 1.36 & 3.93 & 11.14 & 28.82 \\ 
& (15.46) & (0.02) & (0.06) & (0.20) & (0.57) & (1.84)  & (4.86)\\  \hline
\end{tabular}
\end{center}
  \caption{The CPU time (in seconds) of executing different OPT algorithms. The samples are drawn from a bivariate Normal distribution as defined by~(\ref{bnorm})}
\label{table_bnorm_cost}
\end{table}
}

\begin{table}[!htb]
\begin{center}
\caption{Running times (in seconds) for Example~\ref{bivarirate_normal_example}. Average of 5 replicates, standard deviation reported in parentheses. The exact OPT algorithm could not finish the simulation with  $10^5$ samples in under 128GB of memory. \label{table_bnorm_cost}}
\begin{tabular}{cccccccc} 
\hline
\multirow{2}{1.5cm}{Sample Size} & \multirow{2}{1.5cm}{OPT}  &\multicolumn{6}{c}{LL-OPT} \\ \cline{3-8}
& & $h$=1 & $h$=2 & $h$=3 & $h$=4 & $h$=5 & $h$=6 \\ \hline
\multirow{2}{1.2cm}{$10^2$}
& 1.411 & 0.395 & 0.388 & 0.426 & 0.591 & 1.076 & 2.960 \\
& (0.074) & (0.033) & (0.002) & (0.010) & (0.010) & (0.028) & (0.061) \\
\multirow{2}{1.2cm}{$10^3$}
& 31.160 & 0.405 & 0.535 & 0.768 & 1.911 & 6.490 & 25.609 \\
& (0.532) & (0.018) & (0.048) & (0.035) & (0.035) & (0.129) & (0.432) \\
\multirow{2}{1.2cm}{$10^4$}
& 778.208 & 0.561 & 1.472 & 4.072 & 14.475 & 58.316 & 248.463 \\
& (3.028) & (0.022) & (0.099) & (0.199) & (0.197) & (0.589) & (2.529) \\
\multirow{2}{1.2cm}{$10^5$}
& * & 2.904 & 12.654 & 40.836 & 154.157 & 615.615 & 2638.182 \\
& * & (0.055) & (0.585) & (0.301) & (1.404) & (0.995) & (52.133) \\
\hline
\end{tabular}
\end{center}
\end{table}

\remove{
\begin{table}[!htb]
\begin{center}
\begin{tabular}{| c | c | c | c| c | c | c |c|} 
\hline
\multirow{2}{1.5cm}{Sample Size} & \multirow{2}{1.2cm}{OPT} 
&\multicolumn{6}{|c|}{LL-OPT} \\ \cline{3-8}
& & h=1 & h=2 & h=3 & h=4 & h=5& h=6 \\ \hline
\multirow{2}{1.2cm}{N=100} 
& 0.0753 & 0.0840 & 0.0744 & 0.0751 & 0.0753 & 0.0757 & 0.0754 \\ 
& (0.0119) & (0.0156) & (0.0122) & (0.0121) & (0.0120) & (0.0117)  & (0.0117)\\  \hline
\multirow{2}{1.2cm}{N=250} 
& 0.0488 & 0.0515 & 0.0471 & 0.0486 & 0.0490 & 0.0489 & 0.0490 \\ 
& (0.0059) & (0.0084) & (0.0052) & (0.0058) & (0.0060) & (0.0059)  & (0.0059)\\  \hline
\multirow{2}{1.2cm}{N=500} 
& 0.0356 & 0.0372 & 0.0345 & 0.0353 & 0.0354 & 0.0355 & 0.0356 \\ 
& (0.0039) & (0.0053) & (0.0039) & (0.0039) & (0.0039) & (0.0040)  & (0.0040)\\  \hline
\multirow{2}{1.2cm}{N=1000} 
& 0.0257 & 0.0261 & 0.0249 & 0.0256 & 0.0256 & 0.0257 & 0.0257 \\ 
& (0.0022) & (0.0024) & (0.0020) & (0.0020) & (0.0021) & (0.0021)  & (0.0022)\\  \hline
\end{tabular}
\end{center}
  \caption{The Hellinger distances between the estimated and true densities. The samples are drawn from a bivariate Normal distribution as defined by~(\ref{bnorm}).}
\label{table_bnorm_dis}
\end{table}
}

\begin{table}[!htb]
\begin{center}
\caption{Estimation errors (in Hellinger distance) for Example~\ref{bivarirate_normal_example}. Average of 5 replicates, standard deviation reported in parentheses. The exact OPT algorithm could not finish the simulation with $10^5$ samples in under 128GB of memory.\label{table_bnorm_dis}}
\begin{tabular}{cccccccc} 
\hline
\multirow{2}{1.5cm}{Sample Size} & \multirow{2}{1.5cm}{OPT} &\multicolumn{6}{c}{LL-OPT} \\ \cline{3-8}
& & $h$=1 & $h$=2 & $h$=3 & $h$=4 & $h$=5 & $h$=6 \\ \hline
\multirow{2}{1.2cm}{$10^2$}
& 0.561 & 0.922 & 0.612 & 0.571 & 0.564 & 0.562 & 0.561 \\
& (0.019) & (0.005) & (0.029) & (0.029) & (0.021) & (0.022) & (0.018) \\
\multirow{2}{1.2cm}{$10^3$}
& 0.383 & 0.917 & 0.391 & 0.380 & 0.383 & 0.382 & 0.382 \\
& (0.004) & (0.005) & (0.005) & (0.003) & (0.004) & (0.004) & (0.004) \\
\multirow{2}{1.2cm}{$10^4$}
& 0.258 & 0.914 & 0.253 & 0.256 & 0.258 & 0.258 & 0.258 \\
& (0.001) & (0.005) & (0.001) & (0.002) & (0.003) & (0.002) & (0.001) \\
\multirow{2}{1.2cm}{$10^5$}
& * & 0.916 & 0.168 & 0.171 & 0.173 & 0.173 & 0.173 \\
& * & (0.001) & (0.001) & (0.001) & (0.001) & (0.001) & (0.000) \\
\hline
\end{tabular}
\end{center}
\end{table}

Similar to the previous example, the estimation accuracy of the LL-OPT algorithm approaches that of the exact OPT algorithm as the value of $h$ increases. Patterns of running times found in Table~\ref{table_bnorm_cost} are also similar to that of the previous example. According to Table~\ref{table_bnorm_dis}, the estimation accuracy of the LL-OPT algorithm with $h$ as low as $3$ is comparable to that of the exact OPT algorithm. The improvements on running times are still sizable. When $h=3$, the LL-OPT algorithm is over $40$ times faster than the exact OPT algorithm for sample size of $10^3$, and over $190$ times faster for sample size of $10^4$. Finally, estimation based on the LL-OPT algorithm stabilizes for $h\geq3$.

In both simulation examples, when the sample size is large (e.g., $10^5$), we can see that the running time of the LL-OPT algorithm increases roughly by a factor of $p$ when $h$ increases by 1, which is consistent with the time complexity analysis provided in Theorem~\ref{thm_LL-OPT}.

Simulations were run on an Intel Xeon X7560 2.27Ghz on a single core.

\subsection{Smooth density estimation}

We compare the performance of the exact OPT, the LL-OPT and the FEE algorithms for density estimation to the kernel density estimation (KDE)~\citep{Silverman1986} method. 
\remove{Hellinger distance is a commonly used metric in Bayesian non-parametrics. It is the $L_2$ distance of $\sqrt{f(x)}$ and $\sqrt{g(x)}$; the asymptotic mean integrated squared error which is abundant in the kernel estimation literature is not a metric on $\mathcal{P}_1(\Omega)$, so it does not fit our purpose because it is not convenient for us to analyze the properties of FEE on the whole functional space.}

\begin{example}
We consider a strongly skewed distribution
\begin{equation}
x\sim\Gamma(2, 0.1)\mathbf{I}(0, 1)
\label{1d_eq}
\end{equation}
where $\Gamma(a, b) = \frac{1}{b^a\Gamma(a)}x^{a - 1}e^{\frac{-x}{b}}$.
\end{example}

Figure~\ref{1d_fig} shows the estimated densities by the exact OPT, the LL-OPT and the FEE algorithms for sample size $N=10^4$. Table~\ref{1d_tbl} shows the estimation errors measured by Hellinger distance. FEE gives the best results among all the methods.

\begin{figure}[!htb]
\center
\includegraphics[scale=.5]{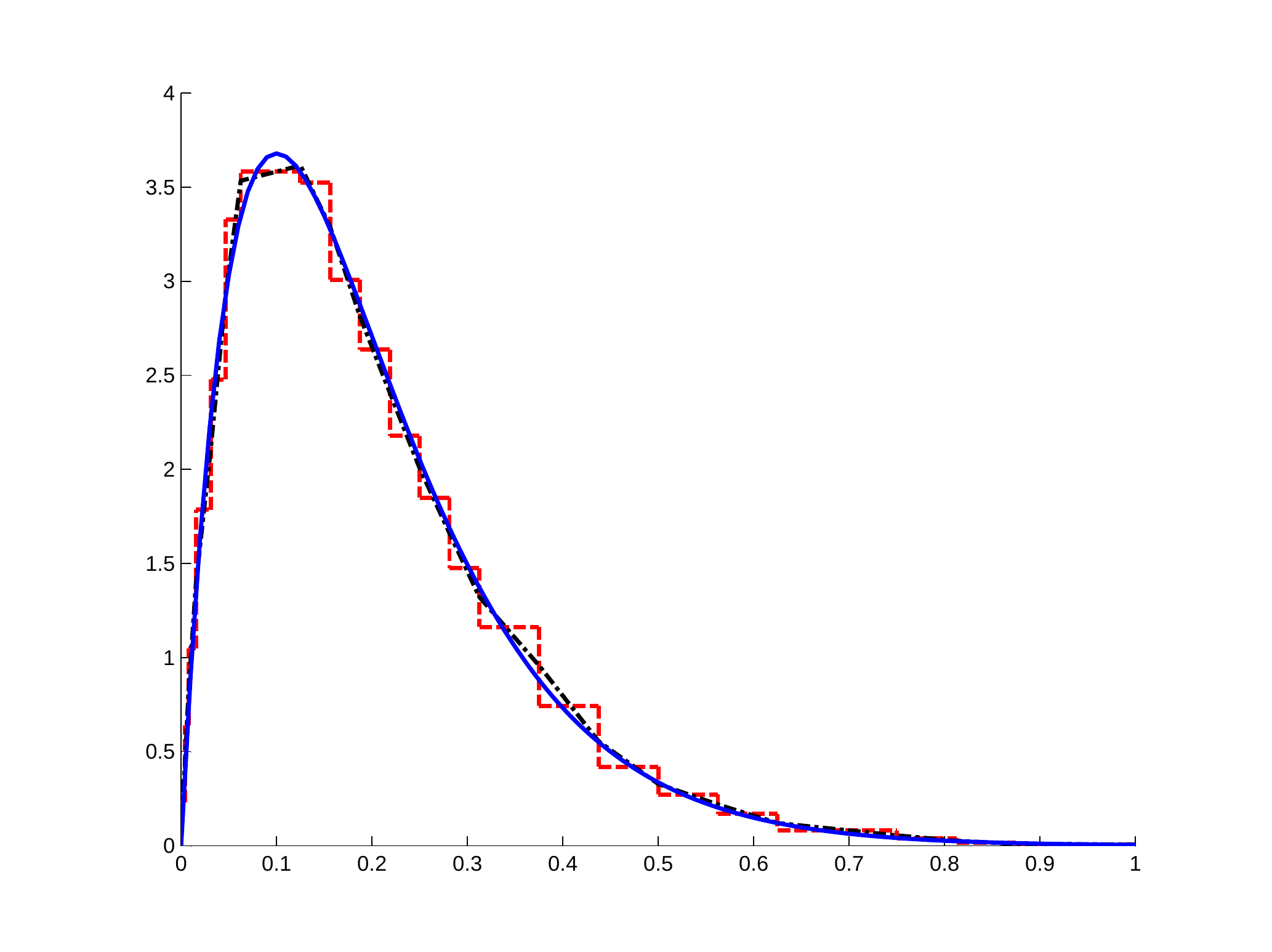}
\caption{Plot of densities estimated with $10^4$ samples simulated from~(\ref{1d_eq}) by OPT and LL-OPT in red (they are complete overlapped), and FEE in black. The true density is in blue. \label{1d_fig}}
\end{figure}

\begin{table}[!htb]
\center
\caption{Estimation errors (in Hellinger distance) of OPT, LL-OPT, FEE and KDE for density~(\ref{1d_eq}). Average of 10 replicates, standard deviation reported in parentheses.  \label{1d_tbl}}
\begin{tabular}{ccccc}
\hline
Sample Size&OPT&LL-OPT&FEE&KDE\\
\hline
\multirow{2}{1.2cm}{$10^2$}&$0.1749$&$0.1674$&$0.1341$&$0.0984$\\
&$(0.0323)$&$(0.0221)$&$(0.0223)$&$(0.0180)$\\
\multirow{2}{1.2cm}{$10^3$}&$0.0772$&$0.0753$&$0.0397$&$0.0474$\\
&$(0.0068)$&$(0.0105)$&$(0.0061)$&$(0.0036)$\\
\multirow{2}{1.2cm}{$10^4$}&$0.0385$&$0.0385$&$0.0192$&$0.0297$\\
&$(0.0020)$&$(0.0043)$&$(0.0020)$&$(0.0012)$\\
\hline
\end{tabular}
\end{table}

\begin{example}
\label{2d_example}
We consider a mixture of a uniform distribution and Beta distributions.
\begin{equation}
X_1\sim U(0, 1);~X_2\sim\frac{4}{5}\beta(2, 10) + \frac{1}{5}\beta(7, 2)
\label{2d_eq}
\end{equation}
\end{example}

Figure~\ref{2d_fig}~(a) shows the densities estimated by the LL-OPT and the FEE algorithms. As a comparison, kernel density estimated is in Figure~\ref{2d_fig}~(b). We see that the uniformity in the first dimension of the distribution is well captured by OPT based approaches, but not by KDE.

\begin{figure}[!htb]
\begin{center}
\begin{tabular}{cc}
\includegraphics[scale=.5]{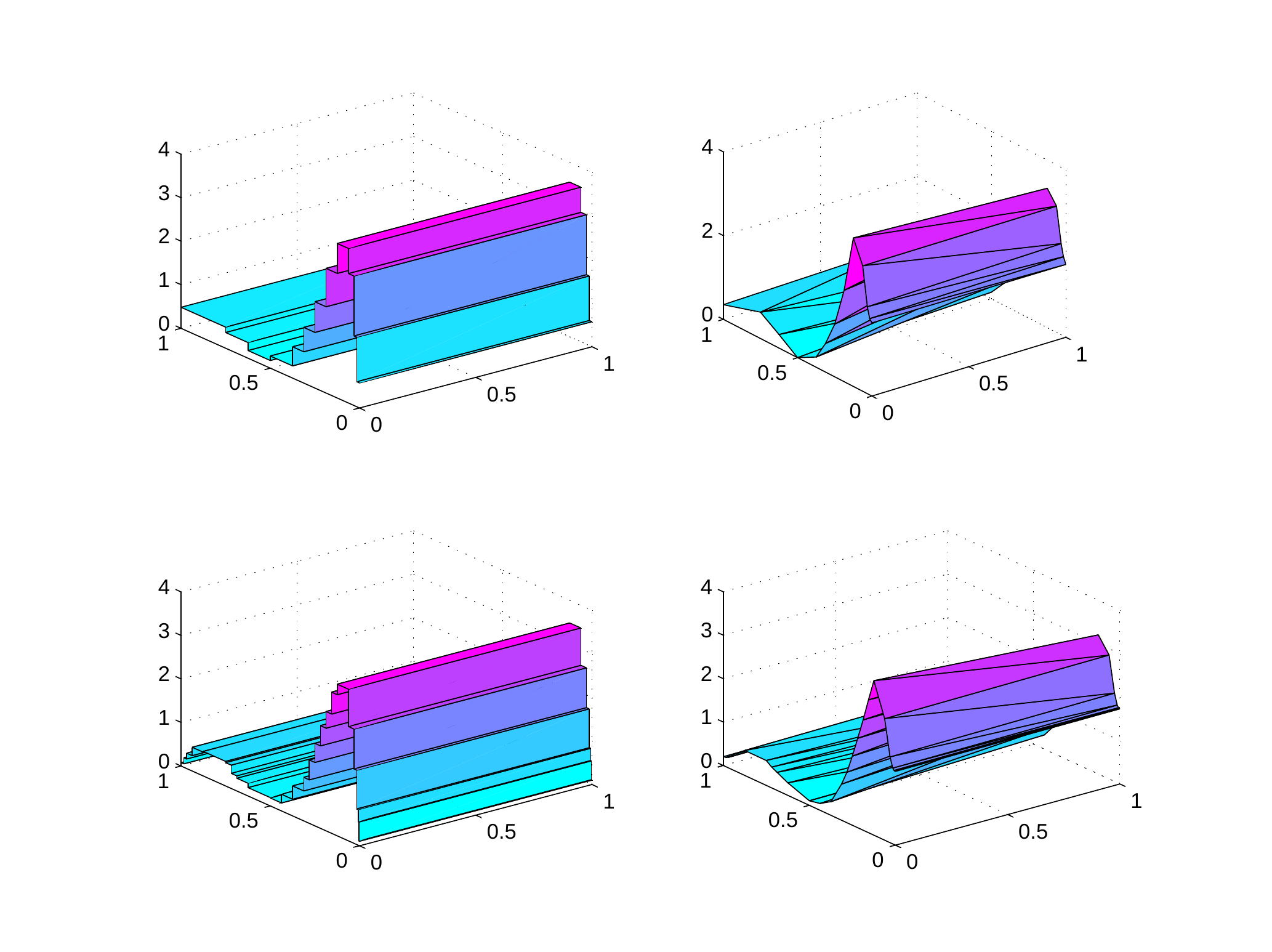} & \hspace{-1cm}\includegraphics[scale=.4]{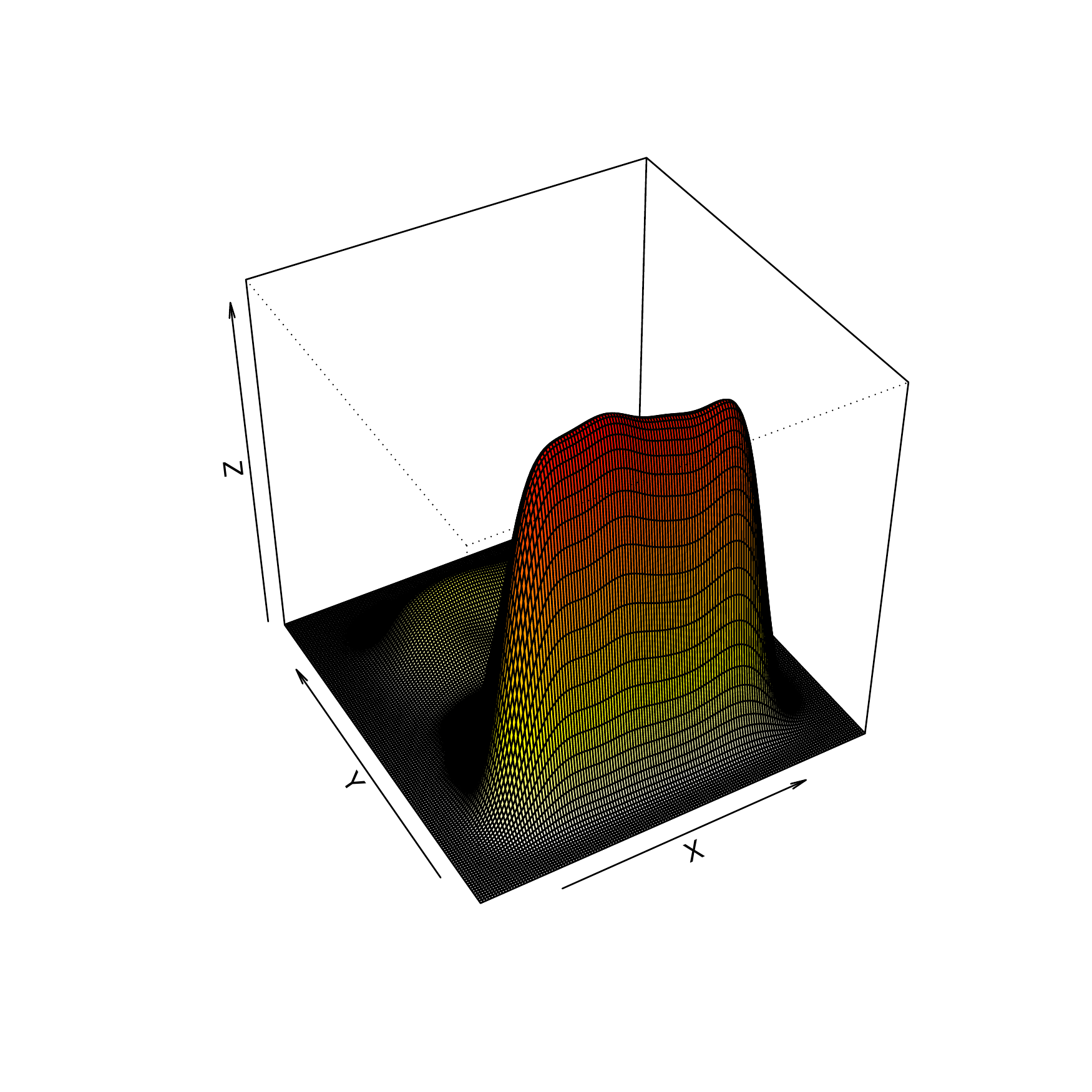}\\
(a)&(b)\\
\end{tabular}
\caption{The estimated densities for Example~\ref{2d_example}. (a) Densities estimated by LL-OPT (left) and FEE (right) with $10^3$ and $10^4$ samples simulated from~(\ref{2d_eq}) respectively. (b) Density estimated by kernel density estimation with $10^4$ samples. \label{2d_fig}}
\end{center}
\end{figure}

\begin{example}
\label{example_high_dim}
We consider a 5-dimensional random vector $$\mathbf{Y} = (X_1, \frac{1}{3}X_1 + \frac{2}{3}X_2, X_3, X_4, \frac{1}{5}X_3 + \frac{4}{5}X_5),$$ where $X_1,\ldots,X_5$ are independent and sampled from
\begin{eqnarray}
  X_1 \sim \beta(2, 8), X_2\sim\beta(8,2), X_3\sim U(0, 1)\\
  X_4 \sim \Gamma(2, 1)\mathbf{I}(0, 1), X_5\sim \Gamma(1, 2)\mathbf{I}(0, 1)
\end{eqnarray}
\end{example}

With a sample size of $6\times10^3$, the Hellinger distances for the LL-OPT and the FEE algorithms are 0.31 and 0.27, respectively. Figure~\ref{highdim} shows marginal histograms based on $10^4$ random samples drawn from the fitted FEE density and the true density. We can see that the fitted density resembles closely the true density in every dimension.

\begin{figure}[!htb]
\begin{center}
\includegraphics[scale=.5]{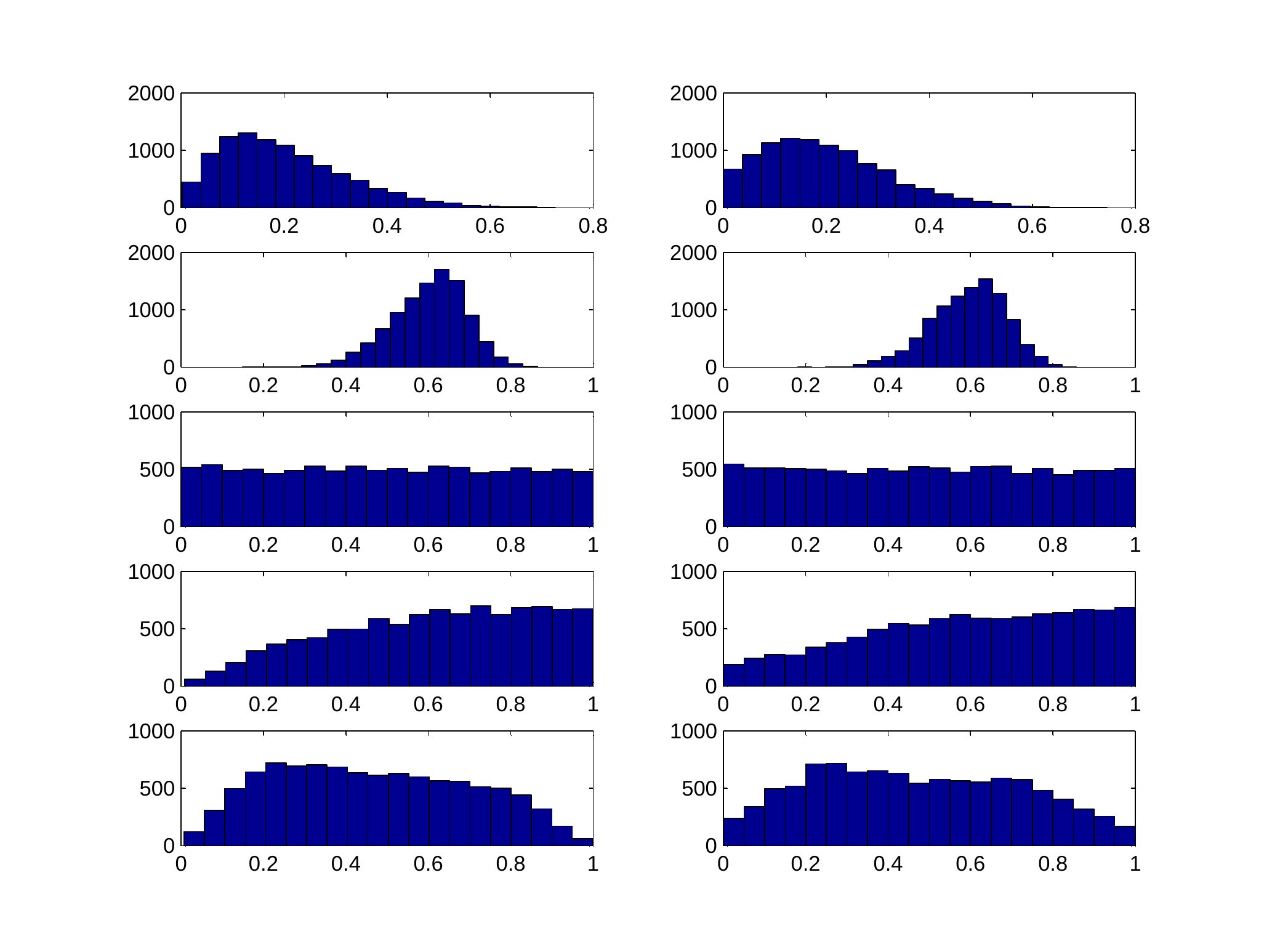}
\caption{The marginal histograms of $10^4$ random samples drawn from the fitted FEE density and the true density for Example~\ref{example_high_dim}. From top to bottom are the histograms of $Y_1$ to  $Y_5$.\label{highdim}}
\end{center}
\end{figure}

\section{Discussion}

We provide two software packages implementing the algorithms described in this paper, available at \url{http://www.stanford.edu/group/wonglab/opt_comp/}. The \textbf{fast-opt} package implements both the exact OPT and the LL-OPT algorithms described in Section 2 and 3. The \textbf{smooth-opt} package implements the smoothing algorithm described in Section 5. Both packages are implemented in C++.

The \textbf{smooth-opt} package relies on the user to specify the value for the tuning parameter $\lambda$. An alternative approach is to use cross-validation to choose $\lambda$ which requires much more computation. In our simulations we found that the result is not sensitive to the value of $\lambda$ and we use $\lambda=10^{-3}$ or $10^{-4}$ in all our simulations.

High dimensional density estimation is a challenging task.\remove{significantly different from univariate density estimation. For instance, it is very difficult to visualize the estimated density when the dimension of the sample space is more than two. Although kernel density estimators can be generalized to high dimensional cases~\citep{Silverman1986}, the performance of these estimators suffer from the infamous curse of dimensionality. High dimensional density estimation therefore has diminishing usefulness and serves mainly as an exploratory data analysis tool, where visualization is not even possible.} The OPT is an adaptive non-parametric density estimation approach which fits to the global landscape of the data adaptively rather than locally as in kernel density estimation~\citep{Silverman1986}, as demonstrated in our simulations. The simulations also suggest that our continuous piecewise linear FEE further improves upon the piecewise constant counterpart.\remove{Even in relatively high dimensional cases (dimension~$=5$), the results are still reasonably good.} The proposed FEE is a penalized regression approach. Based on a fast quadratic programming solver, it achieves decent efficiency when the sample size is reasonably high ($10^5$ data points). The resulting estimated density function is continuous piecewise linear, which is mostly sufficient when we want to explore how the probability mass is distributed or visualize the estimated density. When necessary, the FEE can be extended to use higher order basis functions, at the cost of heavier computation.


\section*{Acknowledgements}
This work was supported by NSF grant DMS-09-06044 and NIH grant R01-HG006018. We thank IBM$\textsuperscript{\textregistered}$ for providing the academic license of Cplex$\textsuperscript{\textregistered}$.

\bibliographystyle{abbrvnat}
\bibliography{opt_computation}

\end{document}